\newcommand{\psfrage}[1]{{\color{blue}{\sf[PS: #1]}}}
\newcommand{\hpfrage}[1]{{\color{violet}\sf[HP: #1]}}
\newcommand{\fkfrage}[1]{{\color{teal}\sf[FK: #1]}}
\newcommand{\dbfrage}[1]{{\color{orange}\sf[DB: #1]}}
\renewcommand{\psfrage}[1]{} \renewcommand{\hpfrage}[1]{} \renewcommand{\fkfrage}[1]{} \renewcommand{\dbfrage}[1]{}
\pgfplotsset{compat=newest}
\pgfplotsset{every axis/.style={scale only axis}}
\definecolor{veryLightGrey}{HTML}{F2F2F2}
\definecolor{colorSimdRecSplit}{HTML}{444444}
\definecolor{colorChd}{HTML}{377EB8}
\definecolor{colorPthashHem}{HTML}{A65628}
\definecolor{colorSicHash}{HTML}{4DAF4A}
\definecolor{colorPthash}{HTML}{984EA3}
\definecolor{colorRecSplit}{HTML}{FF7F00}
\definecolor{colorBbhash}{HTML}{F781BF}
\pgfplotsset{
  mark repeat*/.style={
    scatter,
    scatter src=x,
    scatter/@pre marker code/.code={
      \pgfmathtruncatemacro\usemark{
        or(mod(\coordindex,#1)==0, (\coordindex==(\numcoords-1))
      }
      \ifnum\usemark=0
        \pgfplotsset{mark=none}
      \fi
    },
    scatter/@post marker code/.code={}
  },
  major grid style={thin,dotted},
  minor grid style={thin,dotted},
  ymajorgrids,
  yminorgrids,
  every axis/.append style={
    line width=0.7pt,
    tick style={
      line cap=round,
      thin,
      major tick length=4pt,
      minor tick length=2pt,
    },
    mark options={solid},
  },
  legend cell align=left,
  legend style={
    line width=0.7pt,
    /tikz/every even column/.append style={column sep=3mm,black},
    /tikz/every odd column/.append style={black},
    mark options={solid},
  },
  legend style={font=\small},
  title style={yshift=-2pt},
  enlarge x limits=0.04,
  every tick label/.append style={font=\footnotesize},
  every axis label/.append style={font=\small},
  every axis y label/.append style={yshift=-1ex},
  /pgf/number format/1000 sep={},
  axis lines*=left,
  xlabel near ticks,
  ylabel near ticks,
  axis lines*=left,
  label style={font=\footnotesize},
  tick label style={font=\footnotesize},
  cycle list name=myColorList,
  plotParameters/.style={
    width=22.0mm,
    height=30.0mm,
    ymax=5e7,
    ymin=1e4,
  },
  plotScaling/.style={
    width=38.0mm,
    height=30.0mm,
  },
  plotScalingConfigs/.style={
    width=35.0mm,
    height=30.0mm,
  },
  plotPareto/.style={
    width=25.0mm,
    height=30.0mm,
  },
  plotHfEvals/.style={
    width=35.0mm,
    height=20.0mm,
  },
  plotLeafMethods/.style={
    width=38.0mm,
    height=30.0mm,
  },
  plotProbabilities/.style={
    width=35.0mm,
    height=24.0mm,
    only marks,
    mark size=.75pt,
    cycle list name=transparentHeatmap,
  },
}
\crefname{listing}{Algorithm}{Algorithms}
\newcommand{\mytitle}{High Performance Construction of RecSplit Based Minimal Perfect Hash Functions}
\title{\mytitle}
\titlerunning{\mytitle}
\author{Dominik Bez}{Karlsruhe Institute of Technology, Germany}{dominik.bez@student.kit.edu}{}{}
\author{Florian Kurpicz}{Karlsruhe Institute of Technology, Germany}{kurpicz@kit.edu}{https://orcid.org/0000-0002-2379-9455}{}
\author{Hans-Peter Lehmann}{Karlsruhe Institute of Technology, Germany}{hans-peter.lehmann@kit.edu}{https://orcid.org/0000-0002-0474-1805}{}
\author{Peter Sanders}{Karlsruhe Institute of Technology, Germany}{sanders@kit.edu}{https://orcid.org/0000-0003-3330-9349}{}
\newcommand{\myauthorrunning}{D. Bez, F. Kurpicz, H.-P. Lehmann, P. Sanders}
\authorrunning{\myauthorrunning}
\keywords{compressed data structure, parallel perfect hashing, bit parallelism, GPU, SIMD, parallel computing, vector instructions}
\begin{document}
\maketitle

\def\GpuSpeedup{5438}

\def\SimdSpeedup{239}

\begin{abstract}
  A minimal perfect hash function (MPHF) bijectively maps a set $S$ of objects to the first $|S|$ integers.
  It can be used as a building block in databases and data compression.
  RecSplit [Esposito~et~al.,~ALENEX'20] is currently the most space efficient practical minimal perfect hash function.
  It heavily relies on trying out hash functions in a brute force way.

  We introduce \emph{rotation fitting}, a new technique that makes the search more efficient by drastically reducing the number of tried hash functions.
  Additionally, we greatly improve the construction time of RecSplit by harnessing parallelism on the level of bits, vectors, cores, and GPUs.

  In combination, the resulting improvements yield
  speedups up to \SimdSpeedup{} on an 8-core CPU
  and up to \GpuSpeedup{} using a GPU.  The
  original single-threaded RecSplit implementation needs 1.5 hours
  to construct an MPHF for $5$~Million
  objects with 1.56 bits per object.  On the GPU,
  we achieve the same space usage in just 5
  seconds.  Given that the speedups are larger
  than the increase in energy consumption, our
  implementation is more energy efficient than the
  original implementation. %

\end{abstract}
\newpage

\section{Introduction}
A \emph{Perfect Hash Function} (PHF) is a hash function that does not have collisions, i.e., is injective, on a given set $S$ of objects.
Evaluating the PHF on any object not in $S$ can return an arbitrary value.
A \emph{Minimal Perfect Hash Function} (MPHF) maps the objects in $S$ to the first $n=|S|$ integers, so it is bijective.
MPHFs are useful in many applications, for example, to implement hash tables with guaranteed constant access time \cite{fredman1984storing}.
By storing only fingerprints in the hash table cells \cite{fan2014cuckoo,bender2018bloom}, we obtain an \emph{approximate membership data structure}.
Storing payload data in the cells, we obtain an updatable retrieval data structure~\cite{muller2014retrieval}.
Finally, the perfect hash function values can be used as small identifiers of the input objects \cite{botelho2007perfect}, which are easier to handle and more space efficient than, for example, strings.

MPHFs can be very compact -- the theoretically minimal space usage is 1.44 bits per object \cite{belazzougui2009hash}.
Currently, the most space-efficient practical MPHF is RecSplit \cite{esposito2020recsplit}.
It provides various trade-offs between the space consumption, construction time, and query time.

In this paper, we provide several improvements inside the RecSplit framework.
We first describe RecSplit and other preliminaries in \cref{s:prelim} and briefly review related work in \cref{s:related}.
As a core step during construction, RecSplit tries out hash functions on a small set of objects until one hash function is a bijection.
We introduce a new bijection search mechanism in \cref{s:bijections}, which reduces the search space of the brute force algorithm compared to the original method.
\emph{Rotation fitting} hashes the objects to two sets and tries to fit one set into the ``holes'' of the other set by rotating (cyclically shifting) it.
As a positive side effect, this approach makes good use of bit parallelism.

We then parallelize RecSplit (with and without rotation fitting) using the vector parallelism available with \emph{Single Instruction Multiple Data} (SIMD) instructions and the thread parallelism available with multicore CPUs and GPUs.
Given that hash function construction here is mostly compute bound and can be done in parallel for a huge number of small subproblems, the GPU is an ideal hardware.
Utilizing GPUs for evaluating hash functions is known from mining of cryptocurrencies with proof-of-work approach (e.g., Bitcoin).
Our extensive evaluation in \cref{s:experiments} shows speedups of up to 50 using SIMD, \SimdSpeedup{} when additionally using multi-threading with 16 threads, and \GpuSpeedup{} using a GPU, compared to the original single-threaded implementation without rotation fitting.
Because GPUs are so much faster at constructing MPHFs, they lead to a better energy efficiency than the CPU, as we show in the experiments.
Finally, in \cref{s:conclusion}, we summarize the results and give directions for future research.

\subparagraph*{Our Contributions.}
With \emph{rotation fitting}, we introduce a new method for searching for bijections that can be used in RecSplit.
We significantly accelerate the construction by four kinds of parallelism (bits, vectors, multicores, and GPU).
Together, this accelerates RecSplit constructions by a factor up to \GpuSpeedup{} and even makes its construction
performance competitive to significantly less space efficient minimal perfect hash functions.

\section{Preliminaries}\label{s:prelim}
In \cref{s:basics}, we first shortly describe basic techniques needed by our implementation.
We then continue with describing RecSplit in detail in \cref{s:recsplit}.
Finally, we describe SIMD in \cref{s:simd} and GPUs in \cref{s:gpus}.

\subsection{Basics} \label{s:basics}

\subparagraph*{Words and Bit Vectors.}
An important operation in RecSplit is \texttt{popcount}, which returns the number of 1-bits in a word.
Given a bit vector, the $\textit{rank}_1(x)$ operation returns the number of 1-bits before position $x$, and the $\textit{select}_1(x)$ operation returns the position of the $x$-th 1-bit.
The operation can be executed in constant time \cite{clark1997compact,jacobson1989space} and has very fast and space-efficient implementations \cite{vigna2008broadword,kurpicz2022pasta}.
An additional operation we need in this paper is $\mathrm{rot}_k^{i}(x)$
which rotates (i.e., cyclically shifts) the $k$ least significant bits of $x$ by $i$ bit positions. This can be implemented in a bit parallel way using shifting and masking.

\subparagraph*{Golomb-Rice.}\label{s:golombRice}
The Golomb code \cite{golomb1966run} is a variable length code that is optimal for geometric distributions.
Golomb-Rice \cite{rice1979some} is a faster special case, which is almost as space efficient.
Given a parameter $\tau$ and the number $x$ to store, the $\tau$ least significant bits of $x$
form the \emph{fixed part} which is stored directly.
The remaining bits are encoded in unary, consisting of $\lfloor x/2^{\tau}\rfloor$ 0-bits and a final 1-bit.
To access one element, we can get the lower bits from the array of fixed parts and the upper bits through two $\textit{select}_1$ queries.

\subparagraph*{Elias-Fano.}\label{s:eliasFano}
An Elias-Fano representation~\cite{Elias74,Fano71} can be used to store a monotonic sequence of integers \(p_1,\dots,p_k\) with \(p_k\leq U\).
Similar to Golomb-Rice codes, the least significant bits of each value are stored directly in the lower-bits array and can be accessed directly.
The remaining most significant bits $u$ at index $i$ are encoded as a 1-bit in a bit vector at position $i+u$.
This means that by executing a $\textit{select}_1$ query on the upper bits and looking up the lower bits, we can restore any value in constant time.
Using this representation, the sequence can be stored using \(k(2+\log(U/k))\) bits.

\begin{figure}[t]
  \centering
  \includegraphics[scale=0.7]{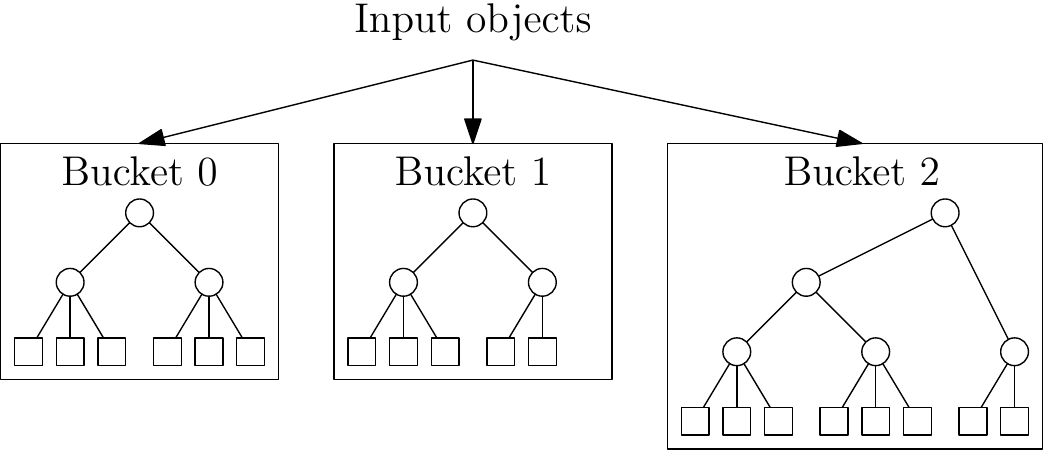}
  \caption{\label{fig:overview} Illustration of the overall RecSplit data structure. Circular nodes of the trees represent splittings, squares represent bijections.}
\end{figure}

\subsection{RecSplit} \label{s:recsplit}

We now describe RecSplit \cite{esposito2020recsplit}, the MPHF that this paper is based on.
\Cref{fig:overview} illustrates the overall data structure.
The first step of the construction is to apply an initial hash function on every object of the input to generate objects of uniform distribution.
These objects are mapped to different buckets of expected size $b$, where $b$ is a tuning parameter.

\subparagraph*{Splitting Trees.}\label{s:splittingTrees}
In each bucket, RecSplit constructs an independent \emph{splitting tree}.
The tree partitions the objects into smaller and smaller sets until the individual sets have a small configurable size $\ell$.
The splitting tree has a well-defined shape, depending only on the leaf size $\ell$ and the number of objects in the bucket.
At each inner node, RecSplit tries random hash functions to find one that distributes the objects to the child nodes according to the tree structure.
The number of child nodes of an inner node is called fanout.
The fanout is optimized in such a way that the expected amount of work to find the splitting is roughly equal to the amount of work in all children combined.
The fanouts of the two bottom-most levels are $\max \{2, \lceil 0.35 \ell + 0.55\rceil\}$ and $\max\{2, \lceil 0.21 \ell + 0.9\rceil\}$.
In the terminology of the RecSplit paper, these levels are called \emph{lower aggregation levels}.
The levels above, also called \emph{upper aggregation levels}, simply use a fanout of $2$.

\subparagraph*{Bijections.}\label{s:bijectionsExplanation}
The lowest level of the splitting tree is called \emph{leaf level}.
Each leaf, except for possibly the last, contains $\ell$ objects.
This is small enough that it is feasible to search for a bijective mapping by trying random hash functions using brute force.
The inner loop of the bijection search applies a hash function modulo $\ell$ on each object.
It converts the value to a bit by taking two to the power of it, and sets the corresponding bit in a bit vector of length $\ell$ using a logical \texttt{OR} operation.
After hashing all objects, if the resulting bit vector has all its bits set to $1$, it means that the hash function is a bijection on the leaf.
If it is not, RecSplit tries the next hash function.

\subparagraph*{Representation.}\label{s:storage}
Because the splitting trees have a well-defined shape, it is enough to store the hash function identifier at each node in preorder.
These numbers are encoded with Golomb-Rice code, where all unary parts and all binary parts of a tree are stored together.
The optimal Golomb parameter $\tau$ is different based on the layer in the tree and can be pre-calculated.
The encodings of all splitting trees from all buckets are concatenated in a single bit vector.
An additional sequence with encoding based on Elias-Fano encodes both the prefix sums of the number of objects in each bucket and the positions where the encoding of each bucket starts.

\subparagraph*{Query.}
A RecSplit hash function can be evaluated by determining the bucket of an object and locating its encoding.
The splitting tree in the bucket is traversed from the root to a leaf by applying the splitting hash function, which determines the child to descend into.
Finding the encoding of a subtree is possible by executing a $\textit{select}_1$ query on the upper bits of the Golomb-Rice coded hash function identifiers.
During traversal, the number of objects stored in children left to the one descended into are accumulated.
The final hash value is then the sum of the value of leaf bijection, the number of objects to the left in the splitting tree, and the total size of previous buckets.

The combination of brute force splitting and bijections is highly space efficient from an information-theoretical point of view -- disregarding overheads due to encoding and metadata, optimal space consumption can be achieved.
Consequently, as the leaf size $\ell$ gets larger, optimal space is approached \cite{esposito2020recsplit}.

\subsection{SIMD} \label{s:simd}
It is common, especially in perfect hashing, that the same operation needs to be executed on different data.
This can be achieved with a simple loop, which means that the corresponding instructions must be decoded by the hardware for every element.
This can be improved by using \emph{Single Instruction, Multiple Data} (SIMD) \cite{flynn1972some}.
A single instruction is used to apply the same operation on a \emph{vector} of several elements.
We refer to a single element within a SIMD vector as a \emph{lane}.
For example, a vector may contain 16 lanes with 32 bits each, i.e., the vector contains 512 bits overall.
The exact set of operations depends on the concrete implementation of the SIMD model.
On many Intel and AMD processors, SIMD operations are available through the Advanced Vector Extensions (AVX) \cite{intel2011avx}.
AVX-512 \cite{intel2013avx512} extends these operations to 512-bit vectors and is divided into many smaller subsets that offer additional operations.
A subset that is useful for our implementation is AVX512VPOPCNTDQ, which provides \texttt{popcount} on 512-bit vectors with lanes of size 32 and 64 bits.
The $\mathrm{rot}_k^{i}$ function that cyclically shifts bits (see \cref{s:basics}) can be implemented in parallel using SIMD.

\subsection{GPUs} \label{s:gpus}
Graphics Processing Units (GPUs) are specialized processors initially designed for computer graphics applications.
Over the last decades, GPUs evolved to general purpose processors for highly parallelizable tasks.
We now describe the hardware and programming interface in the following paragraphs.
To provide a grasp of the dimensions of a current GPU, we give metrics of the NVIDIA RTX 3090 \cite{nvidia2020ampere}, which is also used for our experiments (see \cref{s:experiments}).

\subparagraph*{Compute Hardware.}
A GPU consists of several streaming multiprocessors (SMs) (RTX 3090: 82).
Each SM contains many arithmetic logic units (ALUs) to perform computations (RTX 3090: 64 integer ALUs).
Several threads (RTX 3090: 32) operate in \emph{lock-step}, i.e., they execute the same instruction at the same time.
Such a bundle of threads is called \emph{warp}.
Threads are masked out for instructions they should not execute.
This means that in loops, each thread in a warp has to iterate as many times as the thread with the largest number of iterations.
To hide latencies, e.g., for memory access, each SM is oversubscribed with more threads than ALUs, and the GPU schedules the threads efficiently.
Multiple warps of threads form a \emph{thread block}.
Thread blocks are guaranteed to reside on the same SM, which enables them to cooperate.%

\subparagraph*{Memory.}
The \emph{global memory} is the largest and slowest memory on the GPU (RTX 3090: 24~GB).
When multiple threads of a warp access the memory simultaneously, the hardware serves the requests with as few memory transactions as possible.
\emph{Shared memory} is a fast memory placed on each SM.
It is shared between the threads of the same thread block.
On the RTX 3090, shared memory and L1 cache are allocated on the same memory areas.
The data in shared memory is partitioned into 32 memory banks, and the $i$-th 32-bit word is stored in bank $i \textrm{ mod } 32$.
When multiple threads simultaneously access different words within the same bank, the access operations have to be serialized.

\subparagraph*{CUDA.}
An efficient way to develop applications on NVIDIA GPUs is CUDA \cite{nvidia2022cuda}.
Functions which can be executed on the GPU are called \emph{kernels}.
Each kernel is executed on a \emph{grid} of \emph{thread blocks}.
The grid size and the number of threads per block can be selected by the user.
The user can create several \emph{streams}.
The kernels and data transfers launched into a specific stream are executed in order, but operations in different streams can arbitrarily overlap.

\section{Related Work}\label{s:related}
Perfect Hashing is an active area of research \cite{czech1992optimal, weaver2020constructing, belazzougui2009hash, lehmann2022sichash, pibiri2021pthash, limasset2017fast, botelho2013practical, fox1992faster, botelho2007simple, CSLSR11, MSSZ14}.
Due to a lack of space, we only describe the most recent and fastest algorithms here.
For a more detailed overview of recent methods, refer to Ref. \cite{lehmann2022sichash}.
To the best of our knowledge, there is no technique that constructs MPHFs on the GPU yet.
Lefebvre and Hoppe \cite{lefebvre2006perfect} describe the GPU evaluation of MPHFs that were constructed on CPUs.

\subparagraph*{FiPHa/BBHash.}%
A fast and simple approach to minimal perfect hashing uses fingerprinting and bumping \cite{CSLSR11,MSSZ14,limasset2017fast}.
BBHash \cite{limasset2017fast} is a publicly available parallel implementation.
The set $S$ of input objects is hashed using a
hash function $h\rightarrow \beta n$ for a tuning
parameter $\beta$.  The set $S'$ of objects that
have a collision is handled recursively.
Consider
the bit vector $b$ with
$b[i]=1$ iff $|\{s\in S:h(s)=i\}|=1$.
Then $\textit{rank}_1(h(s))$ defines an MPHF on $S\setminus S'$.
This approach needs at least $e$
bits per object (when $\beta=1$) and provides
efficient queries when about 4 or more bits per
object are available (using larger values of
$\beta$). An advantage is the very simple and easily
parallelizable construction.

\subparagraph*{PTHash.}
PTHash \cite{pibiri2021pthash} is based on FCH \cite{fox1992faster} which can be considered a predecessor of the hash-and-displace technique \cite{belazzougui2009hash}.
The objects are first distributed into different buckets using a hash function, but the distribution is not uniform.
Specifically, about 60\% of the objects are mapped to 30\% of the buckets.
The buckets are then processed in order of decreasing size.
For each bucket, a hash function is searched such that each object can be placed in the output domain without colliding with other objects that are already placed.
The hash function identifiers are searched linearly and then stored in compressed form with several possible compression schemes.
The proclaimed goal of PTHash is fast query times.
Using an appropriate compression scheme, only a single memory access is required to find the hash value, and the remaining operations are simple hash function evaluations and arithmetic.
Compared to the original implementation of RecSplit, PTHash consumes more space, but has faster queries and faster construction time.
PTHash-HEM \cite{pibiri2021parallel} is an implementation that first partitions the input and then constructs each partition independently in parallel.

\subparagraph*{SicHash.}
SicHash \cite{lehmann2022sichash} is based on the simple idea to store the index of the hash function to be used in a retrieval data structure.
It can capitalize on recent progress on fast and nearly space optimal retrieval \cite{DHSW22}.
Computing a valid index for all objects amounts to constructing a cuckoo hash table \cite{pagh2004cuckoo,FPSS05}.
In contrast to the brute force methods at the core of PTHash and RecSplit, this can be done in near linear time even on large tables.
SicHash refines this basic approach using a mix of several fixed precision retrieval data structures and by using many small(ish) cuckoo hash tables rather than a single large table.
Roughly, SicHash allows faster construction than PTHash while offering similar query time and space consumption.

\section{Rotation Fitting}\label{s:bijections}
The general idea of RecSplit consists of two independent steps, bijections and splittings (see \cref{s:recsplit}).
In this section, we introduce a new method for searching for bijections in RecSplit's leaf nodes.
As a reminder, given $m$ objects, we are looking for a way to quickly find a mapping of the objects to the first $m$ integers without any collisions.
The original implementation tries out hash functions using brute force until one of them is a bijection.

\emph{Rotation fitting} ensures that we need significantly fewer hash function evaluations.
From the result of one evaluation, we derive additional candidates that are very fast to compute.
Rotation fitting is efficient when $m \leq w$, where $w$ is the size of a machine word.
We randomly distribute the objects into two sets $A$ and $B$ by using a 1-bit hash function.
The 1-bit hash function is the same for all leaf nodes and does not ensure that $A$ and $B$ have the same size.
Now we search for a hash function $h$ that gives a bijection on the leaf.
Like in the original RecSplit implementation, we calculate the hash value of all objects in $A$ and set the respective bits in the word $a$ to $1$.
The function $h$ may be ruled out as a valid bijection by calculating the \texttt{popcount} of $a$.
Analogously, the set $B$ is mapped to the word $b$ using the same hash function $h$.
Let us now rotate (i.e., cyclically shift) the bits in $b$.
If we can find a rotation value such that the 1-bits in $b$ fit exactly onto the 0-bits in $a$, we have found a bijection on the leaf.
More formally, this is the case if there is an $r \in \{0,...,m-1\}$, such that $a\, |\, \mathrm{rot}^r_m(b)$ has the $m$ least significant bits all set.
In \cref{s:proofRotations}, we show that for large \(m\) the probability of finding a bijection using rotation fitting is about \(m\) times higher than the probability when using RecSplit's brute force approach.

To efficiently store $r$, we only try hash function identifiers which are multiples of $m$. %
This number plus $r$ is stored for each leaf.
We can restore $r$ later by calculating modulo $m$ and restore the hash function index by rounding down to the next multiple of $m$.
At query time, a rotation corresponds to an addition modulo $m$ to each object in the set $B$.
The space overhead per object introduced by rotation fitting tends to 0 for large \(m\) (see \cref{s:proofRotations}).

\subparagraph*{Lookup Tables.}
It is possible to avoid trying out all $m$ rotations by using a lookup table $t$.
For all possible values of $a$, this table contains a rotation parameter $t[a]$ such that $\mathrm{rot}_m^{t[a]}(a)$ is minimal.
If a value $x$ can be rotated to get the value $y$, then $\mathrm{rot}_m^{t[x]}(x) = \mathrm{rot}_m^{t[y]}(y)$.
Let $c=2^m-1$ be the word where the $m$ least significant bits are set.
The value $\hat b = b \oplus c$ is $b$ with the $m$ least significant bits flipped.
Note that $b$ can fill the holes in $a$ if and only if $\hat b$ can be rotated to match $a$.
Thus, the necessary rotation of $b$ can be calculated as $r=(t[\hat b] - t[a]) \bmod m$ using two table lookups.
Rotation $r$ is valid if $a | \mathrm{rot}_m^r(b) = c$.

Because rotation is a very cheap operation, preliminary experiments show no improvement by lookup tables.
Especially on GPUs, shared memory is a scarce resource and global memory is too slow.
Our implementation therefore does not use lookup tables.
Nonetheless, rotation fitting with lookup tables provides an asymptotic improvement of the running time by a factor of $m$.
We also find the idea to normalize random permutations like this an interesting and novel concept.
Applying this idea to other permutations is left for future research.

\section{Parallelization}\label{s:parallel}
We describe the SIMD implementation in \cref{s:parallelSimd} and, on top of it, the multi-threaded implementation in \cref{s:parallelMt}.
Finally, we describe the GPU implementation in \cref{s:parallelGpu}.

\subsection{SIMD} \label{s:parallelSimd}
For the SIMD parallelization, we focus on the description of bijections and splittings, which (in most configurations) take most time of the construction.
While we additionally accelerate the construction of the Elias-Fano data structure, the ideas are more straight forward and are omitted due to space constraints.
The main idea of our SIMD parallelization is to try multiple hash function seeds simultaneously.
Depending on the operation, we use SIMD lanes with a width of either 32 bits or 64 bits.

\subparagraph*{Bijections.}
For the bijections, each SIMD lane is responsible for trying one hash function.
For this, we load consecutive hash function identifiers and the same input object to each lane of a SIMD vector, and evaluate the hash function.
The resulting hash value in each lane is converted to a single bit by taking two to the power of it.
After calculating the logical \texttt{OR} of these bits for all objects in the set, we check for a bijection by comparing each lane with a constant that has all $m$ lower bits set to $1$.
For rotation fitting, remember that the number we store as a seed is the hash function identification plus the rotation.
This number should be as small as possible to avoid wasting space, so caution must be taken when trying out the rotations.
If one lane finds a bijection, it might be possible that a higher rotation leads to a bijection on a lane with a smaller hash function index.
Because this gives a smaller overall number to store, we always try all rotation values, even if a bijection is found.

\subparagraph*{Splittings.}
For the splittings, the original implementation uses small arrays of counters.
Each counter contains the number of objects hashed to the respective split section.
Instead, we use two different methods.
For the \emph{upper} aggregation levels with fanout $2$, we use a single counter for the number of objects hashed to the left child.
The number of objects in the right child can then be determined by subtraction.
For all practical leaf sizes ($\ell \leq 24$), each counter of a valid \emph{lower} level splitting fits into a single byte.
Because an overflowing counter for one child would then just add $1$ to the next counter, such overflows cannot make an invalid splitting look valid.
When a seed for a valid splitting is found, we need to redistribute the objects.
We now use SIMD to apply the same hash function to several objects at once, and store the results in an array.
We then redistribute the objects without SIMD parallelism.

\subsection{Multi-Threading} \label{s:parallelMt}
The original RecSplit implementation only uses a single thread.
This leaves a lot of processing power unused since most modern processors contain multiple processing cores.
As stated in the original RecSplit paper \cite{esposito2020recsplit}, parallelizing RecSplit is fairly easy because the buckets are completely independent of each other.
First, we sort the input objects by their bucket index in parallel, and then determine the bucket borders.
We then start several threads and assign a consecutive portion of the buckets to each thread.
Because the number of buckets is large and the input objects are hashed to buckets uniformly, the load of all threads is reasonably balanced.

After a splitting or bijection is found, it must be stored in the Golomb-Rice coded sequence.
To avoid synchronization, each thread uses its own local sequence and treats its input as if it was the complete input.
This means it also stores the pointers to the start of each bucket encoding locally.
After all threads are done, we sequentially concatenate the Golomb-Rice sequences and build the combined Elias-Fano data structure holding the prefix sum of bucket sizes and pointers to the bucket encodings.

\subsection{GPU} \label{s:parallelGpu}
In the GPU implementation, we first partition the objects to their buckets and partition the buckets by their respective size.
We then use the GPU to determine the splittings and bijections within the buckets.
Buckets with the same size have splitting trees with the same shape and can therefore be handled efficiently within the same set of kernel calls.
This keeps the number of kernel calls small and is important for scalability.
Using CUDA's streams, we additionally construct different bucket shapes concurrently, to utilize the GPU in case the number of buckets having a specific shape is small.
For an overview, see \cref{fig:gpuTree}.

\begin{figure}[t]
  \centering
  \includegraphics[scale=0.6]{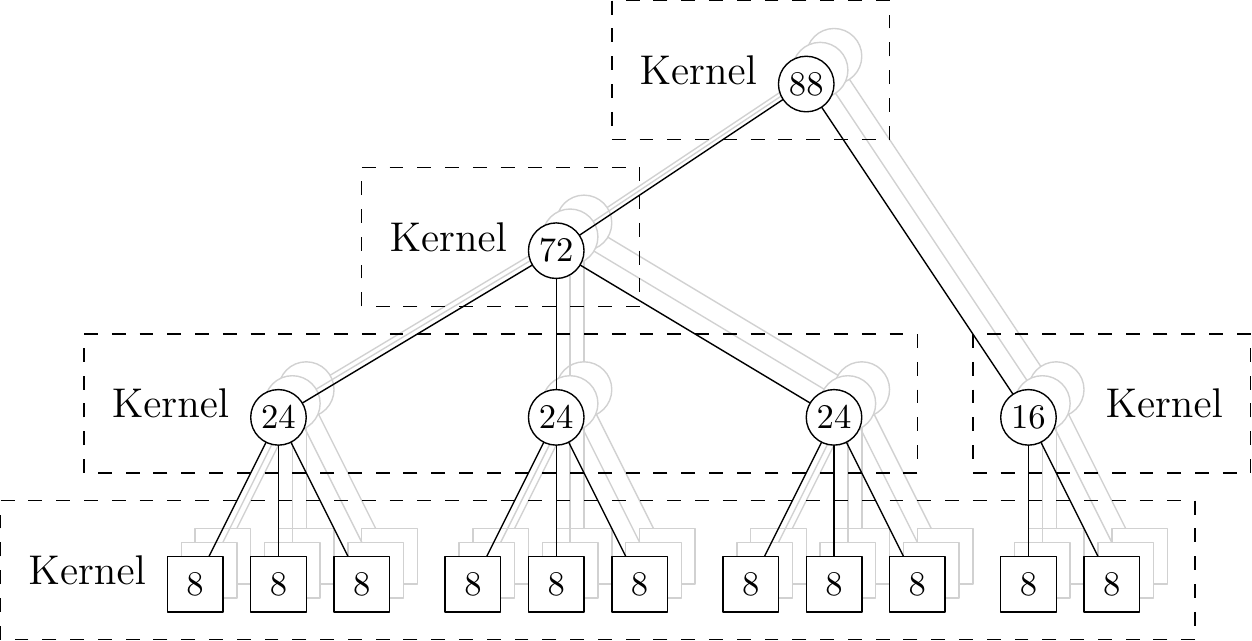}
  \caption{\label{fig:gpuTree} Illustration of how all equally-shaped splitting trees are handled together on the GPU.}
\end{figure}

\subparagraph*{Bijections.}
All leaf nodes\footnote{All leaf nodes except possibly the last of each tree, which might have fewer objects.} of all trees with the same shape are constructed with a single kernel call.
For each leaf node, we start one block of threads.
First, the threads in each block cooperate to load all objects relevant for that leaf node into the shared memory.
Similar to the SIMD implementation, where each lane tried a different hash function, now each thread tries a different hash function.
After each hash function, the threads synchronize, check if a bijection was found and if it was, store the hash function index into global memory.

\subparagraph*{Splittings.}
Like for the bijections, each splitting is handled by a thread block.
The threads cooperate to load the objects into the shared memory and then each thread tries a different hash function index.
For the two lowest aggregation levels, the thread blocks of all nodes in that level are started together using one kernel call (see \cref{fig:gpuTree}).
Note that on these levels, the size of a node and the starting seed is constant.
Therefore, the levels are very homogeneous.
Conversely, the higher levels with fanout $s=2$ are more heterogeneous.
In particular, the number of objects on a specific level may be different for different nodes on the same level.
Therefore, we launch individual kernels for each of those splittings, which contain the thread block for all trees with the same shape.
We use multiplication and shifts to increment the counters of how many objects ended up in each lane.
An alternative variant that stores counters in shared memory is slower in preliminary experiments, even when padding the counters to reduce the probability of bank conflicts.
After a valid splitting is found, the threads in a block cooperate to reorder the objects in that node accordingly.

\subparagraph*{Assembly.}
Because the kernels are launched per level, the results are stored in BFS order.
For the final data structure, we need to store them in preorder.
The CPU unpacks the resulting seeds recursively and writes them to an encoded sequence.

\section{Experiments}\label{s:experiments}
We first describe the experimental setup and general improvements.
We then continue with a comparison of different techniques of our implementation before comparing the implementation with competitors from the literature.
The code and scripts needed to reproduce our experiments are available on GitHub under the General Public License \cite{sourceCode,sourceCodeComparison}.

\subparagraph*{Experimental Setup.}
We ran most of our experiments on an Intel i7 11700 processor with 8 cores (16 hardware threads (HT)) and a base clock speed of 2.5 GHz, supporting AVX-512.
The machine runs Ubuntu 22.04 with Linux 5.15.0 and contains an NVIDIA RTX 3090 GPU.
For additional experiments, we used a machine with an AMD EPYC 7702P processor with 64 cores (128 hardware threads) and a base clock speed of 2.0 GHz.
The machine runs Ubuntu 20.04 with Linux 5.4.0 and supports only AVX2.
Unless otherwise noted, all experiments were run on the Intel machine.
We used the GNU C++ compiler version 11.2.0 with optimization flags \texttt{-O3 -march=native}.
The SIMD implementation only supports x86 CPUs and is optimized towards AVX-512 using the Vector Class Library \cite{fog2013vectorclass}.
The GPU implementation uses CUDA 11.
As a reminder, only the \emph{construction} is using SIMD, multi-threading, and/or the GPU.
The query implementation is identical for the SIMD and GPU implementation and almost equal to the original implementation \cite{esposito2020recsplit}.
We therefore did not compare the query performance of SIMD and GPU implementation.

For the comparison with competitors, we used strings of uniform random length $\in [10, 50]$ containing random characters except for the zero byte.
Note that, as a first step, all competitors generate a \emph{master hash code} (MHC) of each object using a high quality hash function.
This makes the remaining computation largely independent of the input distribution.
When only comparing different configurations of our own data structure, we used random 128-bit integers directly as MHC, which follows the approach of the original implementation~\cite{esposito2020recsplit}.

\subsection{Our Implementation}
While the original implementation \cite{esposito2020recsplit} uses std::sort to partition objects into buckets, we use IPS$^2$Ra \cite{axtmann2020engineering}.
For the less space efficient configurations ($\ell < 5, b < 100$), constructing the buckets is fast, so significant time is spent on sorting objects to buckets.
For these configurations, IPS$^2$Ra both speeds up the sequential case and also enables sorting in parallel.
For more space-efficient configurations ($\ell>8$), the partitioning step needs less than 1\% of the total construction time, both in the parallel and the sequential case.
In this section, we compare against a slight adaption of the original implementation, using IPS$^2$Ra and supporting parallel construction.

\subparagraph*{Rotation Fitting.}
In order to compare rotation fitting with the brute force variant, we give a Pareto front\footnote{A configuration is on the Pareto front if it is not dominated by any other configuration with respect to both construction time and space consumption.} of space usage versus construction time in \Cref{fig:leafMethods}.
The construction time refers to the entire MPHF construction, including the time used for splittings.
Rotation fitting is consistently faster, making the entire MPHF construction up to 3 times faster.
The space overhead of rotation fitting becomes negligible for moderately large $\ell$ (see \cref{s:proofRotations}).
Unless otherwise noted, all following experiments use rotation fitting.

\begin{figure}[t]
        \centering
    \begin{tikzpicture}
        \begin{axis}[
            xlabel={Bits per object},
            ylabel={Objects/second},
            plotLeafMethods,
            xmax=1.8,
            ymode=log,
            legend to name=paretoLeafMethodsLegend,
            legend columns=1,
          ]
          \addplot coordinates { (1.56126,838.089) (1.56408,903.942) (1.57192,2385.37) (1.57372,2389.89) (1.57713,2401.11) (1.58398,8004.09) (1.58615,8054.88) (1.59023,14169.4) (1.60541,14975.8) (1.61295,37715.9) (1.61452,37909.5) (1.61818,37957.6) (1.62544,68819.3) (1.62685,70083) (1.63137,71496) (1.64127,72319.1) (1.66,273254) (1.66157,276197) (1.66445,280332) (1.6741,286254) (1.69112,496968) (1.69325,513611) (1.69589,529269) (1.70536,547585) (1.71024,1.05552e+06) (1.71169,1.10181e+06) (1.71542,1.15821e+06) (1.7246,1.20598e+06) (1.7764,1.49298e+06) (1.77788,1.53657e+06) (1.78191,1.73491e+06) (1.79019,1.94477e+06) (1.85996,2.43191e+06) (1.86459,2.61097e+06) (1.86797,2.86862e+06) (1.8782,3.49895e+06) (1.94458,4.17014e+06) (1.94777,4.66853e+06) (2.01074,5.97372e+06) (2.08715,6.59631e+06) (2.21373,7.36377e+06) (2.2853,8.31947e+06) };
          \addlegendentry{Brute force};
          \addplot coordinates { (1.56442,1594.34) (1.57174,4312.84) (1.57403,4350.74) (1.58579,18128.9) (1.58731,18369.3) (1.59051,20675) (1.592,22406) (1.59556,22871.7) (1.60565,24707.3) (1.61414,101057) (1.61627,101438) (1.61909,103154) (1.62686,117911) (1.62857,126145) (1.63193,127763) (1.64341,133255) (1.66389,854409) (1.66695,877347) (1.66967,915081) (1.67839,965624) (1.69545,1.08131e+06) (1.69689,1.11932e+06) (1.70004,1.18315e+06) (1.7093,1.28502e+06) (1.72495,1.37552e+06) (1.72789,1.51423e+06) (1.73109,1.62496e+06) (1.74172,1.78763e+06) (1.79052,2.01776e+06) (1.79199,2.19877e+06) (1.79579,2.41663e+06) (1.80519,2.81057e+06) (1.87443,3.89712e+06) (1.93918,4.30663e+06) (2.00036,4.363e+06) (2.01172,6.18047e+06) (2.06839,6.74764e+06) (2.09022,6.86813e+06) (2.14599,7.42942e+06) (2.31651,7.49625e+06) (2.39295,8.27815e+06) };
          \addlegendentry{Rotation fitting};
        \end{axis}
    \end{tikzpicture}
    \hfill
    \begin{tikzpicture}
        \begin{axis}[
            xlabel={Bits per object},
            ylabel={Speedup},
            plotLeafMethods,
            xmax=1.8,
          ]
          \addplot coordinates { (1.56126,1.0) (1.56408,1.0) (1.57192,1.0) (1.57372,1.0) (1.57713,1.0) (1.58398,1.0) (1.58615,1.0) (1.59023,1.0) (1.60541,1.0) (1.61295,1.0) (1.61452,1.0) (1.61818,1.0) (1.62544,1.0) (1.62685,1.0) (1.63137,1.0) (1.64127,1.0) (1.66,1.0) (1.66157,1.0) (1.66445,1.0) (1.6741,1.0) (1.69112,1.0) (1.69325,1.0) (1.69589,1.0) (1.70536,1.0) (1.71024,1.0) (1.71169,1.0) (1.71542,1.0) (1.7246,1.0) (1.7764,1.0) (1.77788,1.0) (1.78191,1.0) (1.79019,1.0) (1.85996,1.0) (1.86459,1.0) (1.86797,1.0) (1.8782,1.0) (1.94458,1.0) (1.94777,1.0) (2.01074,1.0) (2.08715,1.0) (2.21373,1.0) (2.2853,1.0) };
          \addlegendentry{bruteforce};
          \addplot coordinates { (1.56442,1.71626) (1.57174,1.87607) (1.57403,1.8197) (1.58579,2.25305) (1.58731,2.00072) (1.59051,1.45768) (1.592,1.57137) (1.59556,1.58365) (1.60565,1.61816) (1.61414,2.66906) (1.61627,2.67418) (1.61909,2.56487) (1.62686,1.68237) (1.62857,1.7864) (1.63193,1.78584) (1.64341,1.68779) (1.66389,3.05672) (1.66695,3.1129) (1.66967,3.22774) (1.67839,3.0128) (1.69545,2.05342) (1.69689,2.10737) (1.70004,2.20268) (1.7093,1.43495) (1.72495,1.1391) (1.72789,1.24027) (1.73109,1.31496) (1.74172,1.38813) (1.79052,1.03655) (1.79199,1.12476) (1.79579,1.22265) (1.80519,1.38296) (1.87443,1.20399) (1.93918,1.04885) (2.00036,0.764025) (2.01172,1.03336) (2.06839,1.04912) (2.09022,1.03858) (2.14599,1.07173) };
          \addlegendentry{rotations};

          \legend{};
        \end{axis}
    \end{tikzpicture}
    \hfill
    \begin{tikzpicture}[baseline=-2cm]
        \ref*{paretoLeafMethodsLegend}
    \end{tikzpicture}
    \caption{Pareto front over the construction throughput of different variants of searching for bijections in the leaves. Single-threaded, non-vectorized measurements with $n=5$~Million objects. The plot on the right gives speedups relative to the brute force method.\footref{fn:paretoSpeedups}}
    \label{fig:leafMethods}
\end{figure}
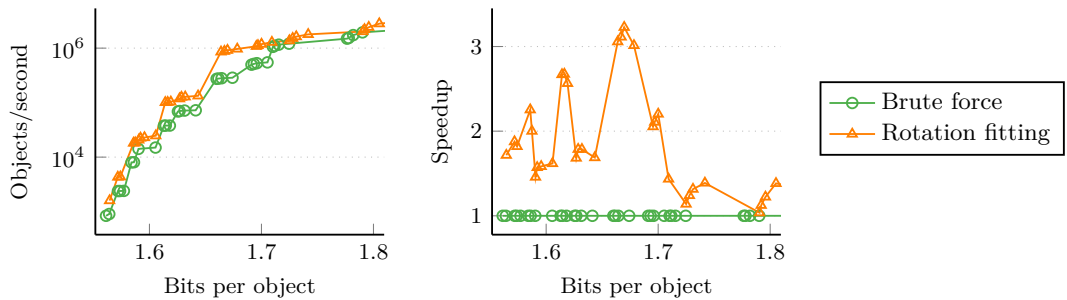

\addtocounter{footnote}{1}
\footnotetext{\label{fn:paretoSpeedups}Note that giving speedups is non-trivial here because there might not be a configuration that achieves the same space usage that we could compare with. We therefore calculate the speedup relative to an interpolation of the next larger and next smaller data points. This is reasonable since RecSplit instances can be interpolated as well by hashing a certain fraction of objects into data structures with different configurations.}

\subparagraph*{Dependence on Input Parameters.}
In \cref{fig:parameters}, we plot the throughput of the SIMD, GPU and non-vectorized versions for different leaf sizes $\ell$ and bucket sizes $b$.
For better comparability with the original paper \cite{esposito2020recsplit}, we include a wide range of configurations, even ones that are not very competitive.
The SIMD version is consistently up to 4.5 times faster than the non-vectorized version and shows the same scaling behavior in $\ell$.
The plot indicates that there is no configuration where one would prefer the non-vectorized version.
While the GPU offers significant speedups for space efficient configurations, it performs not as good for the space inefficient configurations.
A reason for this is data transfers to and from the~GPU.

\subparagraph*{Multi-Threading.}
\label{s:scalingConfigs}
\Cref{tab:showcaseOverall} shows that the parallel construction is up to 5 times faster on an 8-core machine.
In \cref{s:scalingConfigsAppendix}, we give more detailed measurements of how different RecSplit configurations scale in the number of CPU threads.
Rather unusual configurations with extremely small buckets ($b=5$) do not scale as well, because they spend a lot of time partitioning the objects to buckets -- even though we already use the highly optimized parallel sorter IPS$^2$Ra \cite{axtmann2020engineering}.

\begin{figure}[t]
        \centering
    \hspace{10mm}
    \begin{tikzpicture}[trim axis left]
        \begin{axis}[
            title={$b=5$},
            ylabel={Objects/Second},
            xlabel={leaf size $\ell$},
            plotParameters,
            ymode=log,
            legend to name=parametersLegend,
            legend columns=1,
          ]
          \addplot coordinates { (5,9.14077e+06) (6,8.77193e+06) (7,7.82473e+06) (8,6.25e+06) (9,4.59559e+06) (10,3.32889e+06) (11,2.49252e+06) (12,1.87617e+06) (13,1.46628e+06) (14,1.28041e+06) (15,1.14626e+06) (16,1.01854e+06) };
          \addlegendentry{Plain CPU};
          \addplot coordinates { (5,1.37363e+07) (6,1.57233e+07) (7,1.60256e+07) (8,1.49701e+07) (9,1.60256e+07) (10,1.57233e+07) (11,1.65017e+07) (12,1.51976e+07) (13,1.53846e+07) (14,1.65563e+07) (15,1.6129e+07) (16,1.52905e+07) };
          \addlegendentry{GPU};
          \addplot coordinates { (5,2.73224e+07) (6,2.76243e+07) (7,2.71739e+07) (8,2.51256e+07) (9,2.14592e+07) (10,1.74825e+07) (11,1.37741e+07) (12,1.08696e+07) (13,8.62069e+06) (14,7.4184e+06) (15,6.35324e+06) (16,5.63698e+06) };
          \addlegendentry{SIMD};
        \end{axis}
    \end{tikzpicture}
    \begin{tikzpicture}[trim axis left]
        \begin{axis}[
            title={$b=50$},
            xlabel={leaf size $\ell$},
            plotParameters,
            yticklabels={,,},
            ymode=log,
          ]
          \addplot coordinates { (5,7.48503e+06) (6,6.85871e+06) (7,4.34783e+06) (8,3.08642e+06) (9,2.08507e+06) (10,1.38274e+06) (11,342677) (12,287422) (13,185729) (14,117980) (15,59364.8) (16,27807.4) };
          \addlegendentry{CpuRecSplit};
          \addplot coordinates { (5,1.72414e+07) (6,1.69492e+07) (7,1.77305e+07) (8,1.74825e+07) (9,1.77936e+07) (10,1.59744e+07) (11,1.65563e+07) (12,1.65563e+07) (13,1.58228e+07) (14,1.37363e+07) (15,1.15207e+07) (16,7.59878e+06) };
          \addlegendentry{GPURecSplit};
          \addplot coordinates { (5,1.9685e+07) (6,1.92308e+07) (7,1.5528e+07) (8,1.3587e+07) (9,9.63391e+06) (10,6.64894e+06) (11,1.95618e+06) (12,1.53374e+06) (13,906290) (14,507460) (15,242166) (16,109187) };
          \addlegendentry{SIMDRecSplit};

          \legend{};
        \end{axis}
    \end{tikzpicture}
    \begin{tikzpicture}[trim axis left]
        \begin{axis}[
            title={$b=500$},
            xlabel={leaf size $\ell$},
            yticklabels={,,},
            plotParameters,
            ymode=log,
          ]
          \addplot coordinates { (5,4.55373e+06) (6,4.2735e+06) (7,2.81373e+06) (8,1.8005e+06) (9,1.21359e+06) (10,971817) (11,107177) (12,103998) (13,24681.5) (14,19368.7) (15,4379.36) (16,1647.53) };
          \addlegendentry{CpuRecSplit};
          \addplot coordinates { (5,1.62338e+07) (6,1.74825e+07) (7,1.77936e+07) (8,1.77936e+07) (9,1.71821e+07) (10,1.64474e+07) (11,1.50602e+07) (12,1.50602e+07) (13,9.19118e+06) (14,8.62069e+06) (15,3.11721e+06) (16,1.1374e+06) };
          \addlegendentry{GPURecSplit};
          \addplot coordinates { (5,1.42045e+07) (6,1.36986e+07) (7,1.11359e+07) (8,8.80282e+06) (9,6.50195e+06) (10,4.90677e+06) (11,703136) (12,595522) (13,126862) (14,92176.1) (15,20934.2) (16,8170.34) };
          \addlegendentry{SIMDRecSplit};

          \legend{};
        \end{axis}
    \end{tikzpicture}
    \begin{tikzpicture}[trim axis left]
        \begin{axis}[
            title={$b=2000$},
            xlabel={leaf size $\ell$},
            plotParameters,
            yticklabels={,,},
            ymode=log,
          ]
          \addplot coordinates { (5,2.83768e+06) (6,2.7115e+06) (7,2.02429e+06) (8,1.41844e+06) (9,1.08696e+06) (10,854701) (11,124446) (12,101752) (13,22655.3) (14,16878.2) (15,4377.39) (16,1473.28) };
          \addlegendentry{CpuRecSplit};
          \addplot coordinates { (5,1.6835e+07) (6,1.6129e+07) (7,1.71821e+07) (8,1.76678e+07) (9,1.77305e+07) (10,1.73611e+07) (11,1.50602e+07) (12,1.47493e+07) (13,9.38086e+06) (14,8.23723e+06) (15,3.01023e+06) (16,1.03563e+06) };
          \addlegendentry{GPURecSplit};
          \addplot coordinates { (5,1.0989e+07) (6,1.0846e+07) (7,8.96057e+06) (8,7.36377e+06) (9,5.60538e+06) (10,4.36681e+06) (11,648004) (12,574185) (13,115266) (14,85351.9) (15,20682.8) (16,7331.36) };
          \addlegendentry{SIMDRecSplit};

          \legend{};
        \end{axis}
    \end{tikzpicture}
    \hfill
    \begin{tikzpicture}[baseline=-2cm]
        \ref*{parametersLegend}
    \end{tikzpicture}
    \caption{Construction throughput with different hardware architectures based on different input parameters. $n=5$~Million objects, 1 CPU thread.}
    \label{fig:parameters}
\end{figure}
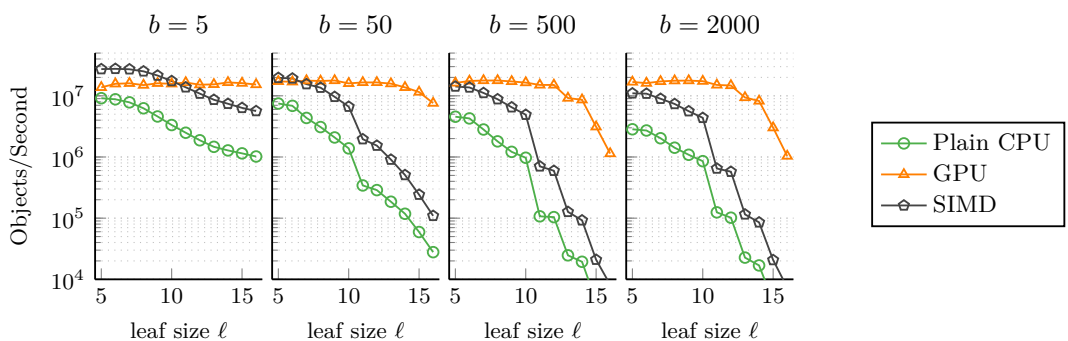

\subparagraph*{Overall Speedup.}
Our rotation fitting technique leads to a speedup of up to 3 (see \cref{fig:leafMethods}) and SIMD parallelism improves the construction speed by up to a factor of 4.5 (see \cref{fig:parameters}).
Multi-threading for highly space-efficient configurations shows a speedup of close to 5.
\cref{tab:showcaseOverall} shows the overall improvement of our implementation on CPU and GPU when compared to the original RecSplit implementation.
The original RecSplit paper says that MPHF construction at 1.56 bits per object is possible.
This configuration with $5$~Million objects takes about 1.5 hours using the original implementation.
Our SIMD implementation achieves the same space usage in just 2 minutes on the CPU and 5 seconds on the GPU.
Investing about 40 minutes of GPU time, our implementation achieves a space usage of only 1.495 bits per object.
This is about 40\% closer to the lower bound \cite{belazzougui2009hash} of $1.44$ bits, and simultaneously more than twice as fast as the original implementation.

\begin{table}[t]
  \caption{Construction time of the GPU implementation compared to our multi-threaded adaption of the original RecSplit implementation. $n=5$~Million objects (strong scaling). Construction times are given in $\mu$s/object. We do not report speedups for $\ell=24$ because the CPU baseline takes too long for this configuration.}
  \label{tab:showcaseOverall}
  \centering
  
\setlength{\tabcolsep}{4.5pt}
\begin{tabular}[t]{lllrlrr}
    \toprule
    Configuration & Method & Bijections & Threads & B/Obj & Constr. & Speedup \\ \midrule
    $\ell=16, b=2000$ & RecSplit \cite{esposito2020recsplit} &      Brute force &   1 & 1.560 & 1175.4 &    1 \\
                      &                             RecSplit &      Brute force &  16 & 1.560 &  206.5 &    5 \\
                      &                         SIMDRecSplit & Rotation fitting &   1 & 1.560 &  138.0 &    8 \\
                      &                         SIMDRecSplit & Rotation fitting &  16 & 1.560 &   27.9 &   42 \\
                      &                          GPURecSplit &      Brute force & GPU & 1.560 &    1.8 &  655 \\
                      &                          GPURecSplit & Rotation fitting & GPU & 1.560 &    1.0 & 1173 \\ \midrule
      $\ell=18, b=50$ & RecSplit \cite{esposito2020recsplit} &      Brute force &   1 & 1.707 & 2942.9 &    1 \\
                      &                             RecSplit &      Brute force &  16 & 1.713 &  504.0 &    5 \\
                      &                         SIMDRecSplit & Rotation fitting &   1 & 1.709 &   58.3 &   50 \\
                      &                         SIMDRecSplit & Rotation fitting &  16 & 1.708 &   12.3 &  239 \\
                      &                          GPURecSplit &      Brute force & GPU & 1.708 &    5.2 &  564 \\
                      &                          GPURecSplit & Rotation fitting & GPU & 1.709 &    0.5 & 5438 \\ \midrule
    $\ell=24, b=2000$ &                          GPURecSplit &      Brute force & GPU & 1.496 & 2300.9 &  --- \\
                      &                          GPURecSplit & Rotation fitting & GPU & 1.496 &  467.9 &  --- \\
    \bottomrule
\end{tabular}

\end{table}

\subparagraph*{Energy Consumption.}
\label{s:energyConsumption}
Of course, directly comparing CPU and GPU implementations is unfair.
A sensible metric to compare them is the energy consumption, which can be a major cost factor.
Additionally, the energy consumption is not influenced by market prices.
\Cref{tab:powerUsage} gives energy consumption measurements for different configurations and hardware architectures.
The energy consumption is homogeneous throughout most of the execution time, except for a short ramp-up in the beginning.
We do not count the ramp-up to the energy consumption.
Measurements are performed using a Voltcraft 870 Multimeter.

Even though SIMD instructions need slightly more power, the total energy consumption of constructing one MPHF is lower.
The GPU, even though it needs significantly more power, is so much faster that the resulting energy usage is about 1000 times lower than the original single-threaded CPU implementation.
For basic RecSplit, the AMD machine needs about 1.5 times more time than the Intel machine.
This can be readily explained by a lower clock frequency.
This performance gap grows to a factor 4.6 for sequential SIMDRecSplit.
The likely main reason is that the AMD machines lacks the AVX-512 vector units of the Intel machine.
Still, since both processors have two 256-bit AVX2 units per core, it seems that better performance might be achievable with careful tuning for the AMD architecture.
On the contrary, the AMD machine shows good scalability so that the energy consumption when using the entire machine is only a factor 1.3 larger than on the Intel machine --
despite the fact that our implementation was tuned for the Intel architecture.

\begin{table}[t]
  \caption{Energy consumption with $\ell=18, b=50$ and $n=5$~Million objects. Energy consumption is both given as difference to the idle power, as well as total energy consumption of the whole system. For CPU-only measurements of the 8-core Intel machine, we dismount the GPU.}
  \label{tab:powerUsage}
  \centering
  \begin{tabular}[t]{ll rr rr rr}
    \toprule
                 &                                      &         &           & \multicolumn{2}{c}{Total system} & \multicolumn{2}{c}{$\Delta$ to idle} \\
                                                                                \cmidrule(lr){5-6}\cmidrule(lr){7-8}
    Machine      & Method                               & Threads & Constr.   & Power &      Energy & Power &      Energy \\
                 &                                      &         & Seconds   &  Watt &       Joule &  Watt &       Joule \\ \midrule %
    8-core Intel & RecSplit \cite{esposito2020recsplit} &       1 & 14\,714.5 &    78 & 1\,147\,731 &    37 &    544\,436 \\ %
                 & SIMDRecSplit                         &       1 &     291.5 &    87 &     25\,360 &    46 &     13\,409 \\ %
                 & SIMDRecSplit                         &      16 &      61.5 &   104 &      6\,396 &    63 &      3\,874 \\ %
                 & GPURecSplit                          &         &       2.5 &   457 &      1\,142 &   380 &         950 \\ \midrule %
    64-core AMD  & RecSplit \cite{esposito2020recsplit} &       1 & 21\,620.8 &   223 & 4\,821\,438 &    91 & 1\,967\,492 \\ %
                 & SIMDRecSplit                         &       1 &  1\,328.7 &   224 &    297\,629 &    92 &    122\,240 \\ %
                 & SIMDRecSplit                         &     128 &      23.6 &   364 &      8\,590 &   232 &      5\,475 \\ %
    \bottomrule
  \end{tabular}
\end{table}

\subsection{Comparison with Competitors}
We now compare our implementation to the sequential codes RecSplit \cite{esposito2020recsplit}, SicHash \cite{lehmann2022sichash}, and CHD\cite{belazzougui2009hash} as well as the parallel codes PTHash \cite{pibiri2021pthash}, PTHash-HEM \cite{pibiri2021parallel} and BBHash \cite{limasset2017fast}.

\subparagraph*{Space usage trade-off.}
\Cref{fig:pareto} shows a space versus construction time Pareto front for each approach.
Looking at a single thread first, we make the surprising observation that SIMDRecSplit not only wins for the most space efficient configurations for which we designed it but, by far, dominates all the other methods also for less space-efficient cases.
For parallel construction, SIMDRecSplit even strengthens its margin to the competing approaches.

\begin{figure*}[t]
    \input{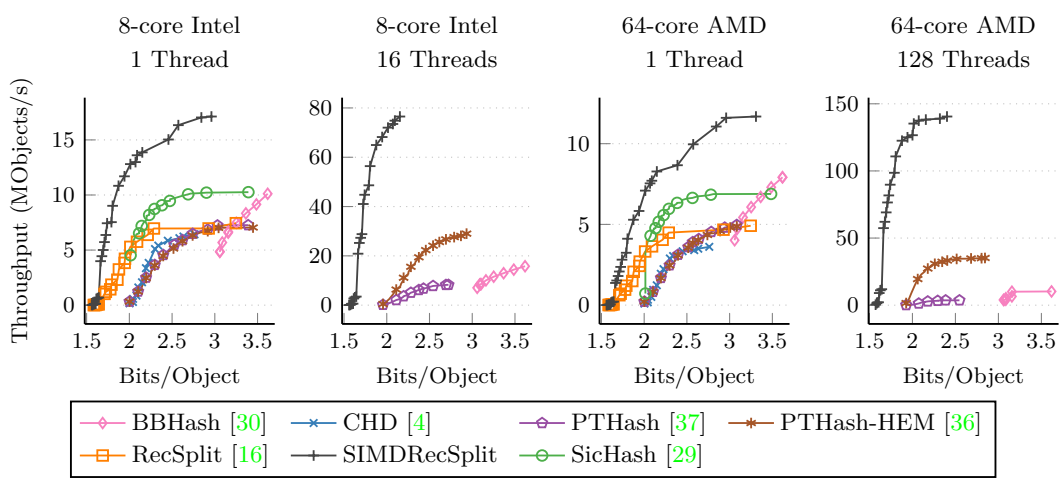}
    \caption{Trade-off of construction time vs space usage. Weak scaling, $n/p=10$ Million objects. For SicHash and PTHash, we plot all Pareto optimal data points but only show markers for every fourth point to increase readability. Therefore, the lines might bend on positions without markers.}
    \label{fig:pareto}
\end{figure*}

\begin{figure}[t]
        \centering
    \begin{tikzpicture}
        \begin{axis}[
            title={8-core Intel Machine},
            plotScaling,
            legend columns=1,
            legend to name=legendScaling,
            ylabel={Speedup},
            xlabel={Threads},
            xtick distance=4,
          ]
          \addplot[mark=diamond,color=colorBbhash,solid] coordinates { (1,1.0) (2,1.29491) (3,1.61798) (4,1.52644) (5,1.85648) (6,1.92216) (7,1.98736) (8,1.98703) (9,2.03924) (10,2.05431) (11,2.02782) (12,2.0714) (13,2.07608) (14,2.08586) (15,2.09868) (16,2.11166) };
          \addlegendentry{BBHash \cite{limasset2017fast}};
          \addplot[mark=pentagon,color=colorPthash,solid] coordinates { (1,1.0) (2,1.4131) (3,1.70486) (4,1.91447) (5,2.04784) (6,2.18374) (7,2.25253) (8,2.25484) (9,2.18663) (10,2.18356) (11,2.17942) (12,2.22413) (13,2.22919) (14,2.25772) (15,2.29812) (16,2.29653) };
          \addlegendentry{PTHash \cite{pibiri2021pthash}};
          \addplot[mark=asterisk,color=colorPthashHem,solid] coordinates { (1,1.0) (2,1.90629) (3,2.63499) (4,3.3488) (5,3.93104) (6,4.52187) (7,4.97211) (8,5.36362) (9,5.21925) (10,5.58109) (11,5.87354) (12,6.12967) (13,6.25524) (14,6.42311) (15,6.57412) (16,6.82404) };
          \addlegendentry{PTHash-HEM \cite{pibiri2021parallel}};
          \addplot[mark=+,color=colorSimdRecSplit,solid] coordinates { (1,1.0) (2,1.96404) (3,2.83236) (4,3.23216) (5,4.24781) (6,4.71789) (7,5.31983) (8,5.90526) (9,5.09263) (10,5.37661) (11,5.45865) (12,5.89539) (13,5.7808) (14,5.66927) (15,6.11899) (16,6.48385) };
          \addlegendentry{SIMDRecSplit};

          \addplot[color=gray,dashed] coordinates { (8,1) (8,6.8) };
          \node[color=gray] at (axis cs: 9,6.5) {\tiny HT};
        \end{axis}
    \end{tikzpicture}
    \hfill
    \begin{tikzpicture}
        \begin{axis}[
            title={64-core AMD Machine},
            xlabel={Threads},
            plotScaling,
            xtick distance=32,
          ]
          \addplot[mark=diamond,color=colorBbhash,solid] coordinates { (1,1.0) (8,2.1758) (16,2.63089) (24,2.88812) (32,2.94426) (40,2.24901) (48,2.98349) (56,2.98996) (64,2.96929) (72,2.95656) (80,2.92487) (88,2.91381) (96,2.87481) (104,2.84639) (112,2.84484) (120,2.83407) (128,2.77347) };
          \addlegendentry{BBHash \cite{limasset2017fast}};
          \addplot[mark=pentagon,color=colorPthash,solid] coordinates { (1,1.0) (8,1.16756) (16,1.20284) (24,1.28861) (32,1.29787) (40,1.30161) (48,1.29184) (56,1.29321) (64,1.2778) (72,1.29598) (80,1.29655) (88,1.28168) (96,1.25078) (104,1.22317) (112,1.18453) (120,1.18374) (128,1.13289) };
          \addlegendentry{PTHash \cite{pibiri2021pthash}};
          \addplot[mark=asterisk,color=colorPthashHem,solid] coordinates { (1,1.0) (8,5.97954) (16,9.1241) (24,11.4663) (32,12.4881) (40,12.7593) (48,13.2779) (56,13.6647) (64,13.4033) (72,14.0646) (80,14.2612) (88,14.3598) (96,14.3953) (104,14.4994) (112,14.7382) (120,14.7794) (128,14.7607) };
          \addlegendentry{PTHash-HEM \cite{pibiri2021parallel}};
          \addplot[mark=+,color=colorSimdRecSplit,solid] coordinates { (1,1.0) (8,7.55383) (16,14.2636) (24,20.8273) (32,26.443) (40,30.0276) (48,31.6439) (56,31.1455) (64,32.4463) (72,34.2767) (80,34.6259) (88,35.7058) (96,38.1213) (104,38.6496) (112,39.6211) (120,39.8388) (128,39.463) };
          \addlegendentry{SIMDRecSplit};

          \addplot[color=gray,dashed] coordinates { (64,1) (64,45) };
          \node[color=gray] at (axis cs: 72,43) {\tiny HT};
          \legend{};
        \end{axis}
    \end{tikzpicture}
    \hfill
        \begin{tikzpicture}[baseline=-2cm]
            \ref*{legendScaling}
        \end{tikzpicture}
    \caption{Construction time speedups when using multiple threads $t$. Strong scaling, $n=50$~Million. Speedups are given relative to each method's single threaded performance. For a comparison of absolute performance, refer to \cref{fig:pareto}.
    Configurations used are BBHash: $\gamma=2.0$; PTHash/PTHash-HEM: $c=6.0$, $\alpha=0.95$, EF; SIMDRecSplit: $\ell=10$, $b=2000$.}
    \label{fig:scalingCompetitors}
\end{figure}
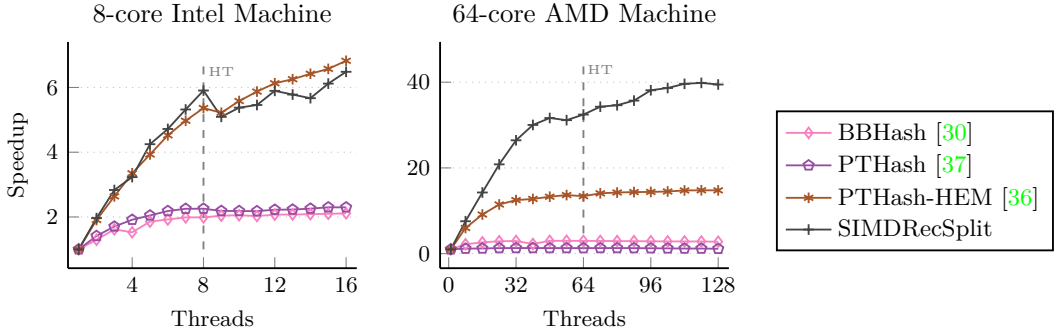

\subparagraph*{Construction Scaling.}
\Cref{fig:scalingCompetitors} compares scaling behavior of the parallel codes.
We see that BBHash scales poorly while both PTHash-HEM and SIMDRecSplit scale well on the 8-core Intel machine.
However, SIMDRecSplit scales better than PTHash-HEM on the 64-core AMD machine.

\begin{table}[t]
  \caption{Query and construction time of different competitor configurations on 10~Million objects.}
  \label{tab:queries}
  \centering
  
\begin{tabular}[t]{ll rr}
    \toprule
    Method & Bits/Obj. & Constr./Obj. & Query/Obj. \\ \midrule

                                BBHash \cite{limasset2017fast}, $\gamma$=$5.0$ & 6.871 &       50 ns &  36 ns \\
                                BBHash \cite{limasset2017fast}, $\gamma$=$1.0$ & 3.059 &      208 ns &  51 ns \\\midrule
              PTHash \cite{pibiri2021pthash}, $c$=$11.0$, $\alpha$=$0.88$, D-D & 4.379 &      138 ns &  25 ns \\
               PTHash \cite{pibiri2021pthash}, $c$=$7.0$, $\alpha$=$0.99$, C-C & 3.313 &      199 ns &  20 ns \\
                PTHash \cite{pibiri2021pthash}, $c$=$6.0$, $\alpha$=$0.99$, EF & 2.345 &      248 ns &  35 ns \\\midrule
     SicHash \cite{lehmann2022sichash}, $\alpha$=$0.9$, $p_1$=$20$, $p_2$=$77$ & 2.412 &      119 ns &  41 ns \\
    SicHash \cite{lehmann2022sichash}, $\alpha$=$0.97$, $p_1$=$44$, $p_2$=$30$ & 2.081 &      172 ns &  40 ns \\\midrule
                     RecSplit \cite{esposito2020recsplit}, $\ell$=$5$, $b$=$5$ & 2.928 &      145 ns &  65 ns \\
                   RecSplit \cite{esposito2020recsplit}, $\ell$=$8$, $b$=$100$ & 1.793 &      709 ns &  75 ns \\
                 RecSplit \cite{esposito2020recsplit}, $\ell$=$14$, $b$=$2000$ & 1.584 & 126\,534 ns &  96 ns \\\midrule
                                             SIMDRecSplit, $\ell$=$5$, $b$=$5$ &  2.96 &       49 ns &  71 ns \\
                                           SIMDRecSplit, $\ell$=$8$, $b$=$100$ & 1.806 &      107 ns &  80 ns \\
                                         SIMDRecSplit, $\ell$=$14$, $b$=$2000$ & 1.585 &  11\,742 ns & 110 ns \\

    \bottomrule
\end{tabular}

\end{table}

\subparagraph*{Queries.}
\Cref{tab:queries} shows that when looking at the query time, PTHash is a clear winner.
While BBHash can achieve the same query speed and good construction speed, its space usage is large.
SicHash has a query time close to PTHash's most compact representation, but is faster to construct and more space efficient.
All RecSplit variants can achieve significantly lower space than other competitors but require considerably more query time.
Our single-threaded SIMD implementation dominates most competitors with respect to both space and construction time.
The use of rotations makes the queries about 10\% slower than the original RecSplit implementation.
The main goal of RecSplit is to achieve extremely small representation, and queries are not very fast to begin with, so this seems acceptable.

\section{Conclusion and Future Work}\label{s:conclusion}

We have shown that by harnessing parallelism at
all available levels -- bits, vectors, cores, and
GPUs -- one can dramatically accelerate the
construction of highly space efficient minimal
perfect hash functions (MPHFs) using the
brute force RecSplit approach \cite{esposito2020recsplit}.
This leads to speedups of up to \SimdSpeedup{} on SIMD and \GpuSpeedup{} on the GPU and
also dramatically reduces energy consumption.
Surprisingly, this
even turns out to be the fastest available
approach for constructing less space-efficient
MPHFs. This is not what we expected. Our initial
hypothesis was that there would be a trade-off with
asymptotically faster approaches winning for fewer
requirements on space consumption.
Our new technique \emph{rotation fitting}
reduces the work needed per
tried hash function while adding a tiny bit
of space requirement. The asymptotically
``obvious'' improvement of replacing $\ell$
rotations/checks by two table lookups are not
productive on current architectures.  So,
brute force, simplicity (in the inner loops), and
parallelism currently wins against any attempt at
algorithmic sophistication.

Another attempt at sophistication that so far
failed is to combine brute force RecSplit with the
retrieval approach of SicHash
\cite{lehmann2022sichash}. The idea of this
\emph{ShockHash} approach is to allow retrieval of
a single bit of information for each element.  The
brute force part then tries pairs of random hash
functions until they define a pseudo-forest -- a
collection of components consisting of a tree plus
one additional edge.  While ShockHash seems to
allow space efficient perfect hashing, initial
experiments indicated that performance-wise
ShockHash is also inferior to pure brute force
(see \cref{s:shockHash} for details).  More
efficient implementations of ShockHash may change
this picture in the future.

Also, rotation fitting could be
generalized by splitting into more than two
parts. The resulting search for several rotations
gives more room for sophistications like search
space pruning. Furthermore, the approach from rotation fitting to use
a lookup table for normalizing bit patterns could
be generalized to a richer set of mappings than
just rotations.  

Everything discussed so far is mainly concerned with construction time.
However, an equally important 
problem is to improve query time.  Traversing an
aggressively compressed tree for each
query is inherently more expensive
than the simple constant time operations needed in
PTHash \cite{pibiri2021pthash} or SicHash
\cite{lehmann2022sichash} but there should be more
efficient ways to break down MPHF construction
into small subproblems that can be solved with
brute force. We believe that the techniques
developed here will turn out to be useful in that
respect.

Finally, we can look for generalizations of
RecSplit for computing non-minimal PHFs which
allows us to further reduce space consumption of
the hash function itself.
Better tuning of the SIMD variant for AMD
  or perhaps even a portable implementation that
  also works on ARM or RISC-V would be relevant for
  widespread application.

\bibliography{paper}

\clearpage
\appendix
\section{Probability of Finding a Bijection}\label{s:proofRotations}
In this section, we show that rotation fitting improves the construction time by a factor close to $m$, while having negligible space overhead.
Refer to \cref{fig:prob} for numeric evaluations of the formula in the proof below.

\begin{figure}[t]
  \centering
      \centering
    \begin{tikzpicture}
        \begin{axis}[
            xlabel={$m$},
            ylabel={\begin{tabular}{c}
                      Expected factor\\ higher probability
                    \end{tabular}},
            plotProbabilities,
          ]
          \addplot coordinates { (2,1.5) (3,2.5) (4,3.25) (5,4.75) (6,5.0625) (7,6.90625) (8,7.453125) (9,8.771875) (10,9.63238) (11,10.9902) (12,11.6976) (13,12.9971) (14,13.8828) (15,14.961) (16,15.9297) (17,16.9998) (18,17.955) (19,18.9999) (20,19.9782) (21,20.9971) (22,21.9892) (23,23) (24,23.9932) (25,24.9999) (26,25.9968) };
          \addlegendentry{data=expected\_factor\_probability};

          \legend{};
        \end{axis}
    \end{tikzpicture}
    \hspace{1cm}
    \begin{tikzpicture}
        \begin{axis}[
            xlabel={$m$},
            ylabel={\begin{tabular}{c}Expec. space overhead\\ (Bits/Object)\end{tabular}},
            plotProbabilities,
          ]
          \addplot coordinates { (2,0.292481) (3,0.194988) (4,0.146241) (5,0.0643856) (6,0.0653862) (7,0.018469) (8,0.0212406) (9,0.00853507) (10,0.00768156) (11,0.00127459) (12,0.00369951) (13,0.000324651) (14,0.00102637) (15,0.000328179) (16,0.000439204) (17,2.07163e-05) (18,0.000211298) (19,5.21361e-06) (20,8.1405e-05) (21,1.0744e-05) (22,3.29499e-05) (23,3.29009e-07) (24,1.73104e-05) (25,2.8893e-07) };
          \addlegendentry{data=space\_overhead\_rotational\_fitting};

          \legend{};
        \end{axis}
    \end{tikzpicture}
  \caption{Expected factor of higher probability to find a bijection using rotation fitting (left) and expected space overhead introduced by storing hash function index \emph{and} rotation compared to RecSplit's brute force approach.}
  \label{fig:prob}
\end{figure}
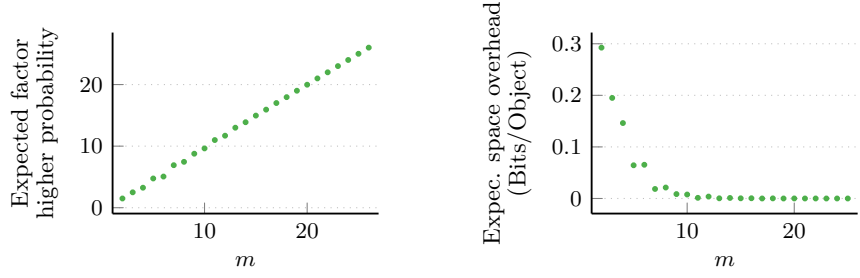

  \newcommand{\bba}{\ensuremath{\mathbb{a}}}
  \newcommand{\bbb}{\ensuremath{\mathbb{b}}}
  \begin{lemma}
    Let \(|A|=\bba\), \(|B|=\bbb\), and \(\mathds{P}(R)\) be the probability of finding a bijection using rotation fitting.
    Furthermore, let \(\mathds{P}(B)\) denote the probability of finding a bijection using RecSplit's brute force strategy.
    If \(\bba\) and \(\bbb\) are relatively prime, then \(\mathds{P}(R)\geq m\mathds{P}(B)\).
    Otherwise, \(\mathds{P}(R)\to m\mathds{P}(B)\) for \(m\to\infty\).
  \end{lemma}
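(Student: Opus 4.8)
The plan is to derive an exact closed form for \(\mathds{P}(R)\) and to read both claims off it. Model the leaf hash function as a uniformly random map \(h\colon A\cup B\to\mathds{Z}_m\), where \(\mathds{Z}_m=\{0,\dots,m-1\}\) with addition modulo \(m\), and write \(S_A=h(A)\), \(S_B=h(B)\) for the image \emph{sets}. RecSplit's brute force succeeds exactly when \(h\) is a bijection, so \(\mathds{P}(B)=m!/m^m\). Rotation fitting with this \(h\) succeeds iff there is an \(r\in\mathds{Z}_m\) with \(S_A\cup(S_B+r)=\mathds{Z}_m\); since \(|S_A|\le\bba\), \(|S_B|\le\bbb\) and \(\bba+\bbb=m\), this forces \(h\) to be injective on \(A\) and on \(B\) (so \(|S_A|=\bba\), \(|S_B|=\bbb\)) and \(S_A,S_B+r\) to partition \(\mathds{Z}_m\).

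First I would observe, by symmetry, that \(\mathds{P}(S_A=U)=\bba!/m^{\bba}\) for \emph{every} \(\bba\)-subset \(U\subseteq\mathds{Z}_m\) (there are \(\bba!\) bijections \(A\to U\) among \(m^{\bba}\) maps \(A\to\mathds{Z}_m\)), likewise \(\mathds{P}(S_B=V)=\bbb!/m^{\bbb}\) for every \(\bbb\)-subset \(V\), and that \(S_A\) and \(S_B\) are independent. Conditioning on \(S_A=U\), rotation fitting succeeds iff \(S_B=\overline U-r\) for some \(r\), i.e.\ iff \(S_B\) lies in the cyclic-rotation orbit of the complement \(\overline U\), which by the orbit-stabilizer theorem has size \(m/|\mathrm{Stab}(U)|\); here \(\mathrm{Stab}(U)=\{s\in\mathds{Z}_m:U+s=U\}\), and \(\mathrm{Stab}(\overline U)=\mathrm{Stab}(U)\) because translation commutes with complementation. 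Summing over \(U\) and using \(\binom{m}{\bba}=m!/(\bba!\,\bbb!)\) gives the identity
\[
  \mathds{P}(R)=\frac{\bba!\,\bbb!}{m^m}\sum_{|U|=\bba}\frac{m}{|\mathrm{Stab}(U)|}
            =m\,\mathds{P}(B)\cdot\mathds{E}_U\!\left[\frac{1}{|\mathrm{Stab}(U)|}\right],
\]
with \(U\) uniform over the \(\bba\)-subsets of \(\mathds{Z}_m\); in particular \(\mathds{P}(R)\le m\,\mathds{P}(B)\) always. (Equivalently, the \(m\) events ``rotation \(r\) fits'' are relabelings of brute force, each of probability \(\mathds{P}(B)\), and their union is the event above.)

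Both claims then follow. A nonzero \(s\in\mathds{Z}_m\) can stabilize a set only if that set is a union of cosets of \(\langle s\rangle\), which forces \(|\langle s\rangle|\) to divide both \(m\) and the set's cardinality; for an \(\bba\)-subset this means \(|\langle s\rangle|\mid\gcd(\bba,m)=\gcd(\bba,\bbb)\). Hence if \(\bba,\bbb\) are coprime, every \(\bba\)-subset has trivial stabilizer, the expectation above equals \(1\), and \(\mathds{P}(R)=m\,\mathds{P}(B)\ge m\,\mathds{P}(B)\). In general, \(1/|\mathrm{Stab}(U)|\ge 1-\mathds{1}[\mathrm{Stab}(U)\neq\{0\}]\), so \(\mathds{E}_U[1/|\mathrm{Stab}(U)|]\ge 1-\mathds{P}_U(\mathrm{Stab}(U)\neq\{0\})\); a subset with nontrivial stabilizer is invariant under rotation by \(m/p\) for some prime \(p\mid\gcd(\bba,m)\), so the number of such \(\bba\)-subsets is at most \(\sum_{p\mid m}\binom{m/p}{\bba/p}\), whence \(\mathds{P}_U(\mathrm{Stab}(U)\neq\{0\})\to0\) and \(\mathds{P}(R)\to m\,\mathds{P}(B)\) as \(m\to\infty\).

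The symmetry computation and the orbit-stabilizer bookkeeping are routine; the part that will need genuine care is the last estimate -- that rotation-invariant \(\bba\)-subsets are a vanishing fraction of all \(\binom{m}{\bba}\) subsets in the regime relevant to RecSplit, where the 1-bit split makes \(\bba,\bbb=\Theta(m)\). This is a short Stirling/entropy bound showing \(\binom{m/p}{\bba/p}\big/\binom{m}{\bba}\) is exponentially small. One should also state precisely in which sense \(\mathds{P}(R)\to m\,\mathds{P}(B)\), namely \(\mathds{P}(R)/(m\,\mathds{P}(B))\to1\), since both quantities tend to \(0\).
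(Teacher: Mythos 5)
Your proof is correct, and it reaches the same combinatorial heart as the paper's -- counting cyclic-rotation orbits of $\bba$-subsets of $\mathds{Z}_m$ -- but packages it differently. The paper conditions on the events $\texttt{popcount}(a)=\bba$ and $\texttt{popcount}(b)=\bbb$, lower-bounds the fitting probability $\mathds{P}(\mathcal{I})$ by one over the number of orbits via the Burnside/necklace formula \(\frac{1}{m}\sum_{d\mid\gcd(\bba,\bbb)}\phi(d)\binom{m/d}{\bbb/d}\), and then kills the $d\neq 1$ terms with a Stirling estimate; the whole argument is a chain of inequalities. You instead apply orbit--stabilizer pointwise to each image set $U$ and obtain the \emph{exact} identity \(\mathds{P}(R)=m\,\mathds{P}(B)\,\mathds{E}_U[1/|\mathrm{Stab}(U)|]\), from which both claims (and the bonus upper bound \(\mathds{P}(R)\leq m\,\mathds{P}(B)\), showing the factor $m$ is tight) drop out; your limit argument replaces Stirling with a union bound over prime-order rotations, i.e.\ \(\sum_{p\mid\gcd(\bba,\bbb)}\binom{m/p}{\bba/p}/\binom{m}{\bba}\to 0\), which is essentially the same estimate the paper needs for its $d\neq1$ terms. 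One thing your route quietly buys: the paper's step \(\mathds{P}(\mathcal{I})\geq 1/N\) (with $N$ the orbit count) is only true on average over $a$ -- it needs a Cauchy--Schwarz argument over orbit sizes that the paper does not spell out -- whereas your exact conditioning sidesteps this entirely. Your closing remarks are also well taken: the convergence must be read as \(\mathds{P}(R)/(m\,\mathds{P}(B))\to 1\), and the vanishing-fraction estimate needs \(\bba,\bbb\geq 1\) (guaranteed here since the $1$-bit split makes both sets nonempty with overwhelming probability for the $m$ of interest). No gaps.
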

  \begin{proof}
    First, we consider the number of different injective functions under cyclic shifts, i.e., equivalence classes under rotation.
    We have a bit vector of length \(m\) with \bbb\ set bits (and \bba\ unset bits).
    Then, the total number of equivalence classes under rotation is \(\frac{1}{m}\sum_{d \textnormal{ divides gcd}(\bba,\bbb)}\phi(d)\binom{m/d}{\bbb/d}\),
    where $\textnormal{gcd}$ gives the greatest common divisor.
    The probability of the event \(\mathcal{I}\) that there is an \(r\) such that \(a | \mathrm{rot}^r_m(b)\) has the \(m\) least significant bits set is
    \[\mathds{P}(\mathcal{I})\geq m\frac{1}{\sum_{d\textnormal{ divides gcd}(\bba,\bbb)}\phi(d)\binom{m/d}{\bbb/d}},\]
    where \(\phi(i)=|\{j\leq i\colon \textnormal{gcd}(i,j)=1\}|\) is Euler's totient function.
    Now, we determine the probability \(\mathds{P}(R)\) using the events
    \(\mathcal{A}\): \texttt{popcount}(a)=\bba\ and \(\mathcal{B}\): \texttt{popcount}(b)=\bbb.
    \begin{align*}
      \mathds{P}(R)&=    \mathds{P}(\mathcal{A})\mathds{P}(\mathcal{B})\mathds{P}(\mathcal{I})\\
                   &\geq \frac{m!}{(m-\bba)!m^{\bba}}\cdot\frac{m!}{(m-\bbb)!m^{\bbb}}\cdot\mathds{P}(\mathcal{I})
                    =    \frac{m!}{m^m}\cdot\frac{m!}{\bba!\bbb!}\cdot\mathds{P}(\mathcal{I})=\mathds{P}(B)\cdot\frac{m!}{\bba!\bbb!}\cdot\mathds{P}(\mathcal{I})\\
                   &\geq \mathds{P}(B)\cdot\frac{m!}{\bba!\bbb!}\cdot m\frac{1}{\sum_{d\textnormal{ divides gcd}(\bba,\bbb)}\phi(d)\binom{m/d}{b/d}}\\
                   &=    \mathds{P}(B)\cdot m\cdot \frac{m!}{m!+(\bba!\bbb!)\sum_{d\textnormal{ divides gcd}(\bba,\bbb), d\neq 1}\phi(d)\binom{m/d}{b/d}}\\
                   &=    \mathds{P}(B)\cdot m\cdot \frac{1}{1+\sum_{d\textnormal{ divides gcd}(\bba,\bbb), d\neq 1}\phi(d)\frac{(m/d)!\bba!\bbb!}{m!(\bba/d)!(\bbb/d)!}}\\
                   &\sim \mathds{P}(B)\cdot m\cdot \frac{1}{1+\sum_{d\textnormal{ divides gcd}(\bba,\bbb), d\neq 1}\phi(d)\sqrt{d}\frac{\bba^{\bba-\bba/d}\bbb^{\bbb-\bbb/d}}{m^{m-m/d}}}\\
                   &\to  \mathds{P}(B)\cdot m\textnormal{~for~}m\to\infty
    \end{align*}
    Note that if \bba\ and \bbb\ are relatively prime the sum is zero, as {\(\textnormal{gcd}(\bba,\bbb)=1\)}.
  \end{proof}

\section{Scaling} \label{s:scalingConfigsAppendix}
\Cref{fig:scalingConfigs} shows how the SIMD version scales when selecting a different number of CPU threads.
The configurations are adopted from the RecSplit paper \cite{esposito2020recsplit}.
On the Intel machine, we can see that the most space efficient configuration scales best, closely followed by the other configurations.
Only a variant with extremely small buckets ($b=5$) does not scale as well.
In this case, the entire construction is dominated by partitioning the objects to buckets.
Given that we already use the highly optimized sorter IPS$^2$Ra \cite{axtmann2020engineering} and that this is a rather unusual RecSplit configuration with a lot of space overhead, having non-optimal speedups here is acceptable.
On the AMD machine, the difference between the different configurations is slightly more pronounced.

\begin{figure}[t]
    \centering

    \begin{tikzpicture}
        \begin{axis}[
            title={8-core Intel Machine},
            xlabel={Threads},
            ylabel={Speedup},
            plotScalingConfigs,
            legend columns=1,
            legend to name=legendScalingConfigs,
            ytick distance=1,
          ]
          \addplot coordinates { (1,1.0) (2,1.73143) (4,2.43545) (6,2.75771) (8,2.95371) (10,3.06254) (12,3.13579) (14,3.16577) (16,3.14143) };
          \addlegendentry{$\ell=5$, $b=5$ ($n=2$$\cdot$$10^9$)};
          \addplot coordinates { (1,1.0) (2,1.9106) (4,2.83032) (6,3.50032) (8,3.95445) (10,3.85432) (12,4.06196) (14,4.15802) (16,4.28118) };
          \addlegendentry{$\ell=8$, $b=100$ ($n=5$$\cdot$$10^8$)};
          \addplot coordinates { (1,1.0) (2,1.80295) (4,2.8497) (6,3.61861) (8,4.23481) (10,3.9172) (12,4.14479) (14,4.28171) (16,4.38441) };
          \addlegendentry{$\ell=12$, $b=9$ ($n=2$$\cdot$$10^8$)};
          \addplot coordinates { (1,1.0) (2,1.84964) (4,2.99015) (6,3.91628) (8,4.72807) (10,4.22609) (12,4.66785) (14,4.79084) (16,5.06277) };
          \addlegendentry{$\ell=16$, $b=2000$ ($n=5$$\cdot$$10^5$)};

          \addplot[color=gray,dashed] coordinates { (8,1) (8,5.2) };
          \node[color=gray] at (axis cs: 9,5.0) {\tiny HT};
        \end{axis}
    \end{tikzpicture}
    \hfill
    \begin{tikzpicture}
        \begin{axis}[
            title={64-core AMD Machine},
            xlabel={Threads},
            plotScalingConfigs,
          ]
          \addplot coordinates { (1,1.0) (16,5.99939) (32,7.06479) (48,7.48198) (64,7.52038) (80,7.66004) (96,7.6989) (112,7.71334) (128,7.64579) };
          \addlegendentry{$\ell$=$5$, $b$=$5$};
          \addplot coordinates { (1,1.0) (16,12.9933) (32,21.1469) (48,25.5027) (64,25.9867) (80,27.917) (96,28.3753) (112,29.2206) (128,30.1967) };
          \addlegendentry{$\ell$=$8$, $b$=$100$};
          \addplot coordinates { (1,1.0) (16,14.7744) (32,26.9098) (48,34.8715) (64,37.3352) (80,38.0622) (96,40.4099) (112,42.7599) (128,45.0598) };
          \addlegendentry{$\ell$=$12$, $b$=$9$};
          \addplot coordinates { (1,1.0) (16,13.6386) (32,26.5655) (48,33.149) (64,41.3189) (80,41.2611) (96,41.6834) (112,42.6082) (128,48.0024) };
          \addlegendentry{$\ell$=$16$, $b$=$2000$};

          \addplot[color=gray,dashed] coordinates { (64,1) (64,54) };
          \node[color=gray] at (axis cs: 72,49) {\tiny HT};

          \legend{};
        \end{axis}
    \end{tikzpicture}
    \hfill
    \begin{tikzpicture}[baseline=-2.3cm]
        \ref*{legendScalingConfigs}
    \end{tikzpicture}
    \caption{Construction speedup by number of threads used, for different configurations. The number of input objects $n$ is selected such that construction takes a similar amount of time on all configurations. Configurations are the examples that are highlighted in the RecSplit paper \cite{esposito2020recsplit}.}
    \label{fig:scalingConfigs}
\end{figure}
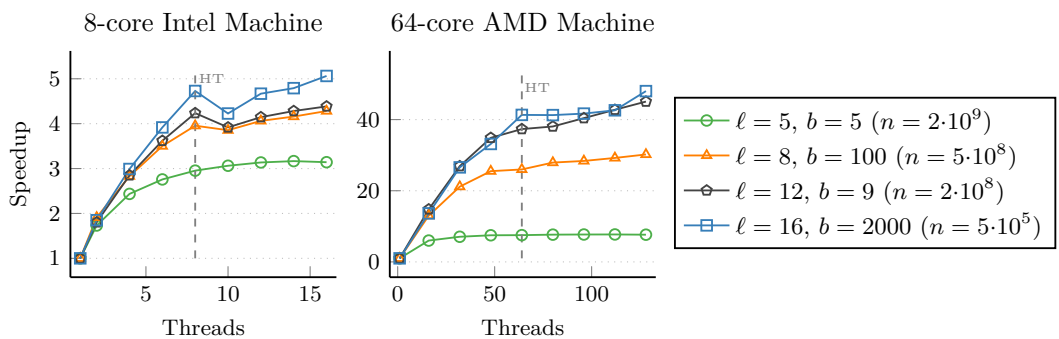

\section{ShockHash} \label{s:shockHash}
ShockHash (small, heavily overloaded cuckoo hash tables) provides an additional method for finding bijections and is based on an idea introduced in SicHash \cite{lehmann2022sichash}.
SicHash already \emph{overloads} cuckoo hash tables beyond their asymptotic maximal load factor, but here we drive this idea to its extremes.
In a (binary) cuckoo hash table \cite{pagh2004cuckoo}, each object can be placed at two different positions, determined by two hash functions.
When we create a cuckoo hash table of size $m$ and then successfully insert $m$ objects, the object positions implicitly describe a bijection.
We can then use a 1-bit retrieval data structure that maps each object to a bit indicating which of the hash functions was used to place it.
The retrieval data structure can be stored with space close to 1 bit per object \cite{DHSW22}.
Since, asymptotically, such cuckoo hash tables can support only $m/2$ objects, the success probability $p$ tends to 0 as $m$ gets large.
However, $p$ is much larger than the probability for directly finding a bijection using brute force as in basic RecSplit.
Indeed, our experiments indicate that $p\rightarrow 2^{-0.44m}$ for large $m$.
Overall, the space for the retrieval data structure plus the space for encoding the choice of hash function seem to converge to the lower bound of 1.44 bits per object.
ShockHash reduces the need for undirected, brute force searching significantly and replaces it with the more directed construction of cuckoo hash tables.

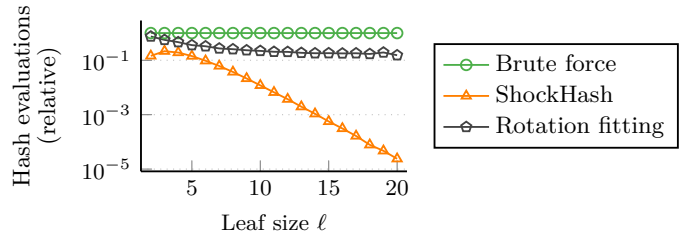
\begin{figure}[t]
        \centering
    \begin{tikzpicture}
        \begin{axis}[
            xlabel={Leaf size $\ell$},
            ylabel={\begin{tabular}{c}Hash evaluations\\ (relative)\end{tabular}},
            plotHfEvals,
            ymode=log,
            legend columns=1,
            legend style={at={(1.1, 0.5)},anchor=west},
          ]
          \addplot coordinates { (2,1.0) (3,1.0) (4,1.0) (5,1.0) (6,1.0) (7,1.0) (8,1.0) (9,1.0) (10,1.0) (11,1.0) (12,1.0) (13,1.0) (14,1.0) (15,1.0) (16,1.0) (17,1.0) (18,1.0) (19,1.0) (20,1.0) };
          \addlegendentry{Brute force};
          \addplot coordinates { (2,0.14507) (3,0.215694) (4,0.192306) (5,0.142792) (6,0.0958738) (7,0.0617299) (8,0.0371929) (9,0.0214151) (10,0.0119559) (11,0.00664028) (12,0.00372006) (13,0.00196426) (14,0.00108054) (15,0.000563638) (16,0.000310436) (17,0.00016479) (18,7.81211e-05) (19,4.69162e-05) (20,2.39372e-05) };
          \addlegendentry{ShockHash};
          \addplot coordinates { (2,0.749847) (3,0.56226) (4,0.464262) (5,0.360423) (6,0.327374) (7,0.27244) (8,0.25541) (9,0.23375) (10,0.221337) (11,0.205367) (12,0.203541) (13,0.187957) (14,0.181493) (15,0.182725) (16,0.17885) (17,0.181346) (18,0.170712) (19,0.193733) (20,0.153462) };
          \addlegendentry{Rotation fitting};
        \end{axis}
    \end{tikzpicture}
    \caption{Average number of hash function evaluations needed to find a leaf bijection, relative to the brute force method.}
    \label{fig:hashFunctionEvals}
\end{figure}

\Cref{fig:hashFunctionEvals} compares the average number of hash function evaluations needed to find a leaf bijection between ordinary brute force, rotation fitting, and a prototypical implementation of the ShockHash approach.
It shows that rotation fitting is able to reduce the number of hash function evaluations by a linear factor.
At the same time, ShockHash requires exponentially fewer hash function evaluations than the brute force method.

\begin{figure}[t]
        \centering
    \begin{tikzpicture}
        \begin{axis}[
            xlabel={Bits per object},
            ylabel={Objects/second},
            plotLeafMethods,
            xmax=1.8,
            ymode=log,
            legend to name=paretoLeafMethodsShockHashLegend,
            legend columns=1,
          ]
          \addplot coordinates { (1.56126,838.089) (1.56408,903.942) (1.57192,2385.37) (1.57372,2389.89) (1.57713,2401.11) (1.58398,8004.09) (1.58615,8054.88) (1.59023,14169.4) (1.60541,14975.8) (1.61295,37715.9) (1.61452,37909.5) (1.61818,37957.6) (1.62544,68819.3) (1.62685,70083) (1.63137,71496) (1.64127,72319.1) (1.66,273254) (1.66157,276197) (1.66445,280332) (1.6741,286254) (1.69112,496968) (1.69325,513611) (1.69589,529269) (1.70536,547585) (1.71024,1.05552e+06) (1.71169,1.10181e+06) (1.71542,1.15821e+06) (1.7246,1.20598e+06) (1.7764,1.49298e+06) (1.77788,1.53657e+06) (1.78191,1.73491e+06) (1.79019,1.94477e+06) (1.85996,2.43191e+06) (1.86459,2.61097e+06) (1.86797,2.86862e+06) (1.8782,3.49895e+06) (1.94458,4.17014e+06) (1.94777,4.66853e+06) (2.01074,5.97372e+06) (2.08715,6.59631e+06) (2.21373,7.36377e+06) (2.2853,8.31947e+06) };
          \addlegendentry{Brute force};
          \addplot coordinates { (1.58496,4436.68) (1.58982,4482.34) (1.59221,7961.73) (1.59645,8043.51) (1.60013,14364) (1.60456,14550) (1.60994,45167.9) (1.6143,45627.1) (1.61953,77465.3) (1.62496,78644.8) (1.63178,127288) (1.63593,128942) (1.64419,129628) (1.64801,198083) (1.65187,204901) (1.65945,294950) (1.66123,299958) (1.6646,302444) (1.66485,305904) (1.66531,328926) (1.66717,333667) (1.66997,335166) (1.67511,369604) (1.67698,427533) (1.68566,485437) (1.68703,493535) (1.69033,500601) (1.69538,558160) (1.69714,569022) (1.70037,585823) (1.71108,623752) (1.71299,640533) (1.71677,661989) (1.72098,765814) (1.72265,803342) (1.72548,838785) (1.73589,890155) (1.75997,893017) (1.76806,999800) (1.7782,1.01461e+06) (1.78684,1.05932e+06) (1.79951,1.06564e+06) (1.80341,1.08719e+06) (1.80368,1.19161e+06) (1.80456,1.20395e+06) (1.82606,1.21036e+06) (1.8322,1.28966e+06) (1.83513,1.30993e+06) (1.84704,1.31926e+06) (1.86483,1.4339e+06) (1.86971,1.45943e+06) (1.90036,1.57729e+06) (1.93648,1.58328e+06) (1.93911,1.72652e+06) (1.97006,1.85667e+06) (1.99882,1.88893e+06) (2.01317,2.35294e+06) (2.04874,2.36072e+06) (2.08971,2.87356e+06) (2.29107,3.003e+06) (2.32991,3.37838e+06) (2.40994,3.61795e+06) (2.60762,3.7037e+06) };
          \addlegendentry{ShockHash};
          \addplot coordinates { (1.56442,1594.34) (1.57174,4312.84) (1.57403,4350.74) (1.58579,18128.9) (1.58731,18369.3) (1.59051,20675) (1.592,22406) (1.59556,22871.7) (1.60565,24707.3) (1.61414,101057) (1.61627,101438) (1.61909,103154) (1.62686,117911) (1.62857,126145) (1.63193,127763) (1.64341,133255) (1.66389,854409) (1.66695,877347) (1.66967,915081) (1.67839,965624) (1.69545,1.08131e+06) (1.69689,1.11932e+06) (1.70004,1.18315e+06) (1.7093,1.28502e+06) (1.72495,1.37552e+06) (1.72789,1.51423e+06) (1.73109,1.62496e+06) (1.74172,1.78763e+06) (1.79052,2.01776e+06) (1.79199,2.19877e+06) (1.79579,2.41663e+06) (1.80519,2.81057e+06) (1.87443,3.89712e+06) (1.93918,4.30663e+06) (2.00036,4.363e+06) (2.01172,6.18047e+06) (2.06839,6.74764e+06) (2.09022,6.86813e+06) (2.14599,7.42942e+06) (2.31651,7.49625e+06) (2.39295,8.27815e+06) };
          \addlegendentry{Rotation fitting};
        \end{axis}
    \end{tikzpicture}
    \hfill
    \begin{tikzpicture}
        \begin{axis}[
            xlabel={Bits per object},
            ylabel={Speedup},
            plotLeafMethods,
            xmax=1.8,
          ]
          \addplot coordinates { (1.56126,1.0) (1.56408,1.0) (1.57192,1.0) (1.57372,1.0) (1.57713,1.0) (1.58398,1.0) (1.58615,1.0) (1.59023,1.0) (1.60541,1.0) (1.61295,1.0) (1.61452,1.0) (1.61818,1.0) (1.62544,1.0) (1.62685,1.0) (1.63137,1.0) (1.64127,1.0) (1.66,1.0) (1.66157,1.0) (1.66445,1.0) (1.6741,1.0) (1.69112,1.0) (1.69325,1.0) (1.69589,1.0) (1.70536,1.0) (1.71024,1.0) (1.71169,1.0) (1.71542,1.0) (1.7246,1.0) (1.7764,1.0) (1.77788,1.0) (1.78191,1.0) (1.79019,1.0) (1.85996,1.0) (1.86459,1.0) (1.86797,1.0) (1.8782,1.0) (1.94458,1.0) (1.94777,1.0) (2.01074,1.0) (2.08715,1.0) (2.21373,1.0) (2.2853,1.0) };
          \addlegendentry{bruteforce};
          \addplot coordinates { (1.58496,0.552724) (1.58982,0.340472) (1.59221,0.557951) (1.59645,0.555144) (1.60013,0.978138) (1.60456,0.974668) (1.60994,1.92353) (1.6143,1.20444) (1.61953,1.87066) (1.62496,1.2042) (1.63178,1.77951) (1.63593,1.79404) (1.64419,1.58696) (1.64801,2.01423) (1.65187,1.6542) (1.65945,1.16747) (1.66123,1.08856) (1.6646,1.07853) (1.66485,1.09028) (1.66531,1.17118) (1.66717,1.18332) (1.66997,1.18145) (1.67511,1.25869) (1.67698,1.38639) (1.68566,1.20746) (1.68703,1.16876) (1.69033,1.04173) (1.69538,1.0608) (1.69714,1.07036) (1.70037,1.08934) (1.71108,0.576562) (1.71299,0.57148) (1.71677,0.568232) (1.72098,0.645341) (1.72265,0.671967) (1.72548,0.693249) (1.73589,0.707191) (1.75997,0.643292) (1.76806,0.695325) (1.7782,0.654314) (1.78684,0.571362) (1.79951,0.533292) (1.80341,0.537816) (1.80368,0.588995) (1.80456,0.59353) (1.82606,0.558274) (1.8322,0.583159) (1.83513,0.58666) (1.84704,0.567643) (1.86483,0.54568) (1.86971,0.493168) (1.90036,0.426567) (1.93648,0.388558) (1.93911,0.420564) (1.97006,0.36694) (1.99882,0.332941) (2.01317,0.3927) (2.04874,0.376635) (2.08971,0.434714) };
          \addlegendentry{cuckoo};
          \addplot coordinates { (1.56442,1.71626) (1.57174,1.87607) (1.57403,1.8197) (1.58579,2.25305) (1.58731,2.00072) (1.59051,1.45768) (1.592,1.57137) (1.59556,1.58365) (1.60565,1.61816) (1.61414,2.66906) (1.61627,2.67418) (1.61909,2.56487) (1.62686,1.68237) (1.62857,1.7864) (1.63193,1.78584) (1.64341,1.68779) (1.66389,3.05672) (1.66695,3.1129) (1.66967,3.22774) (1.67839,3.0128) (1.69545,2.05342) (1.69689,2.10737) (1.70004,2.20268) (1.7093,1.43495) (1.72495,1.1391) (1.72789,1.24027) (1.73109,1.31496) (1.74172,1.38813) (1.79052,1.03655) (1.79199,1.12476) (1.79579,1.22265) (1.80519,1.38296) (1.87443,1.20399) (1.93918,1.04885) (2.00036,0.764025) (2.01172,1.03336) (2.06839,1.04912) (2.09022,1.03858) (2.14599,1.07173) };
          \addlegendentry{rotations};

          \legend{};
        \end{axis}
    \end{tikzpicture}
    \hfill
    \begin{tikzpicture}[baseline=-2cm]
        \ref*{paretoLeafMethodsShockHashLegend}
    \end{tikzpicture}
    \caption{Pareto front for space usage versus construction performance of different methods of finding a bijection.}
    \label{fig:leafMethodsShockHash}
\end{figure}
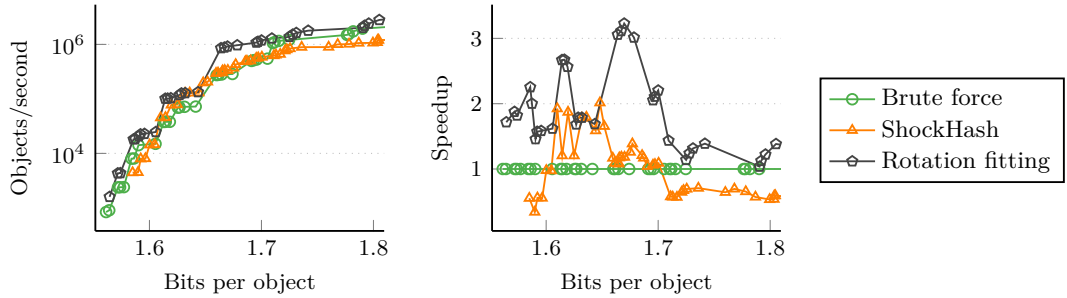

\Cref{fig:leafMethodsShockHash} gives experiments of our prototypical implementation of ShockHash, integrated as a bijection search method into RecSplit.
For the same leaf size, the space usage of the final MPHF is still larger than for the brute force approach because of higher constant factors.
However, since ShockHash is faster, we can also achieve larger leaf sizes.
In order to adapt to the faster bijection search, we modify the fanout to be 2 for the entire splitting tree.
The look at the Pareto front indicates that ShockHash is a practical solution.
Unfortunately, in its current state, it is unable to consistently beat even the brute force technique.
In case that the hash function evaluations are expensive, this can still be useful.
Another problem of ShockHash is that it is hard to parallelize the construction of cuckoo hash tables using SIMD instructions.

We have some ideas on how to improve the ShockHash construction.
An efficient implementation, as well as a formal proof for the presumption that $p\rightarrow 2^{-0.44m}$ is left for future work.

\end{document}